\title{Largest Weight Common Subtree Embeddings with Distance Penalties}
\titlerunning{Largest Weight Common Subtree Embeddings}
\author{Andre Droschinsky}{Department of Computer Science, TU Dortmund University\\{Otto-Hahn-Str.\,14, 44221 Dortmund, Germany}}{andre.droschinsky@tu-dortmund.de}{0000-0002-7983-3739}{}%
\author{Nils M.\@ Kriege}{Department of Computer Science, TU Dortmund University\\{Otto-Hahn-Str.\,14, 44221 Dortmund, Germany}}{nils.kriege@tu-dortmund.de}{0000-0003-2645-947X}{}%
\author{Petra Mutzel}{Department of Computer Science, TU Dortmund University\\{Otto-Hahn-Str.\,14, 44221 Dortmund, Germany}}{petra.mutzel@tu-dortmund.de}{0000-0001-7621-971X}{}%
\authorrunning{A.~Droschinsky, N.~Kriege and P.~Mutzel}%
\subjclass{\ccsdesc[100]{Theory of computation~Graph algorithms analysis}\vspace{-.1em}}%
\keywords{maximum common subtree, largest embeddable subtree, topological embedding, maximum weight matching, subtree homeomorphism}%
\tikzstyle{edge}=[draw=black, thick,-]
\tikzstyle{vertex}=[circle, draw=black,inner sep=3.5pt,semithick]
\tikzstyle{vertexWhite}=[vertex, fill=white]
\tikzstyle{vertexLarge}=[vertex, inner sep=4.5pt]
\tikzstyle{vertexWhiteLarge}=[vertexLarge, fill=white]
\tikzstyle{edgeIso}=[draw=black, -, dotted, thick, bend angle=20, bend right]
\tikzstyle{edgeIsoGreen}=[draw=black!20!green, -,dashed, thick, bend angle=20, bend left]
\def\MWM{MWM\xspace}
\def\ES{CSE\xspace}
\def\LES{LaCSE\xspace}
\def\WES{LaWeCSE\xspace}
\def\WESu{LaWeCSE$_\text{u}$\xspace}
\newcommand{\trtr}[0]{\ensuremath{\curlywedge}\xspace}
\newcommand{\tsk}[0]{\ensuremath{\diamond}\xspace}
\newcommand{\tin}[0]{\ensuremath{t\in\{\trtr,\tsk\}}\xspace}
\newcommand{\wIso}[0]{\ensuremath{\omega}\xspace}
\newcommand{\wPath}[0]{\ensuremath{\omega_{\text{p}}}\xspace}
\newcommand{\WIso}[0]{\ensuremath{\mathcal{W}}\xspace}
\newcommand{\wMatch}[0]{\ensuremath{w}\xspace}
\newcommand{\WMatch}[0]{\ensuremath{W}\xspace}
\newcommand{\MWMk}[1]{\text{MWM$_{#1}$}}
\newcommand{\TabL}[0]{\ensuremath{\mathcal L}\xspace}
\begin{document}

\maketitle

\begin{abstract}
The largest common embeddable subtree problem asks for the largest possible tree embeddable into two input trees
and generalizes the classical maximum common subtree problem.
Several variants of the problem in labeled and unlabeled rooted trees have been studied, e.g., for the comparison of evolutionary trees.
We consider a generalization, where the sought embedding is maximal with regard to a weight function on pairs of labels. We support rooted and unrooted trees with vertex and edge labels as well as distance penalties for skipping vertices.
This variant is important for many applications such as the comparison of chemical structures and evolutionary trees.
Our algorithm computes the solution from a series of bipartite matching instances, which are solved efficiently by exploiting their structural relation and imbalance.
Our analysis shows that our approach improves or matches the running time of the formally best algorithms for several problem variants. 
Specifically, we obtain a running time of $\mathcal O(|T|\,|T'|\Delta)$ for two rooted or unrooted trees $T$ and $T'$,
where $\Delta=\min\{\Delta(T),\Delta(T')\}$ with $\Delta(X)$ the maximum degree of $X$.
If the weights are integral and at most $C$, we obtain a running time of $\mathcal O(|T|\,|T'|\sqrt\Delta\log (C\min\{|T|,|T'|\}))$ for rooted trees.
\end{abstract}

\section{Introduction}
The maximum common subgraph problem asks for a graph with a maximum number of 
vertices that is isomorphic to induced subgraphs of two input graphs.
This problem arises in many domains, where it is important to find 
the common parts of objects which can be represented as graphs.
An example of this are chemical structures, which can be interpreted directly as 
labeled graphs. Therefore, the problem has been studied extensively in
cheminformatics~\cite{EhRa2011,Raymond2002,Schietgat2013}.
Although elaborated backtracking algorithms have been developed~\cite{Raymond2002,McCreesh2017},
solving large instances in practice is a great challenge.
The maximum common subgraph problem is \textsf{NP}-hard and remains so even when the input graphs are restricted to trees~\cite{Garey1979}. However, in trees it becomes polynomial-time solvable when the common subgraph is required to be connected, i.e., it must be a tree itself. This problem is then referred to as \emph{maximum common subtree problem} and the first algorithm solving it in polynomial-time is attributed to J.~Edmonds~\cite{Matula1978}.
Also requiring that the common subgraph must be connected (or even partially biconnected) several extensions to tree-like graphs have been proposed, primarily for applications in cheminformatics~\cite{Yamaguchi2004,Schietgat2013,DrKrMu2017}. Some of these approaches are not suitable for practical applications due to high constants hidden in the polynomial running time. Other algorithms are efficient in practice, but restrict the search space to specific common subgraphs.
Instead of developing maximum common subgraph algorithms for more general graph classes, which has proven difficult, a different approach is to represent molecules simplified as trees~\cite{Rarey1998}.
Then, vertices typically represent groups of atoms and their comparison requires to score the similarity of two vertices by a weight function. This, however, is often not supported by algorithms for tree comparison. Moreover, it maybe desirable to map a path in one tree to a single edge in the other tree, skipping the inner vertices. Formally, this is achieved by \emph{graph homeomorphism} instead of isomorphism.

Various variants for comparing trees have been proposed and investigated~\cite{Valiente2002}.
Most of them assume rooted trees, which may be ordered or unordered. Algorithms
tailored to the comparison of evolutionary trees typically assume only the 
leaves to be labeled, while others support labels on all vertices or do not 
consider labels at all.
The well-known agreement subtree problem, for example, considers the case, where 
only the leaf nodes are labeled, with no label appearing more than once per 
tree~\cite{MaTh13}.
We discuss the approaches most relevant for our work.
Gupta and Nishimura~\cite{GuNi1998} investigated the \emph{largest common embeddable
subtree} problem in unlabeled rooted trees.
Their definition is based on topological embedding (or homeomorphism) and allows 
to map edges of the common subtree to vertex-disjoint paths in the input trees.
The algorithm uses the classical idea to decompose the problem into subproblems 
for smaller trees, which are solved via bipartite matching. A solution for two 
trees with at most $n$ vertices is computed in time $\mathcal{O}(n^{2.5} \log n)$
using a dynamic programming approach.
Fig.~\ref{subfig:mcsles} illustrates the difference between maximum common subgraph and largest common embeddable subtree.
Lozano and Valiente~\cite{LoVa2004} investigated the \emph{maximum common embedded subtree} problem, which is based on edge contraction. In both cases the input graphs are rooted unlabeled trees.
Note, the definition of their problems is not equivalent.
The first is polynomial time solvable, while the second is \textsf{NP}-hard for unordered trees, but polynomial time solvable for ordered trees.
Many algorithms do not support trees, where leaves and the inner vertices both have labels. A notable exception is the approach by Kao et al.~\cite{Kao2001}, where only vertices with the same label may be mapped. This algorithm generalizes the approach by Gupta/Nishimura and improves its running time to 
$\mathcal{O}(\sqrt{d}D \log\frac{2n}{d})$, where $D$ denotes the number
of vertex pairs with the same label and $d$ the maximum degree of all vertices.

We consider the problem of finding a \emph{largest weight common subtree embedding} (\WES),
where matching vertices are not required to have the same label, but their 
degree of agreement is determined by a weight function.
We build on the basic ideas of Gupta and Nishimura~\cite{GuNi1998}.
To prevent arbitrarily long paths which are mapped to a common edge we study 
a linear distance penalty for paths of length 
greater than 1.
Note that, by choosing a high distance 
penalty, we solve the maximum common subtree (MCS) problem as a special case. 
By choosing weight 1 for equal labels and sufficiently small negative weights 
otherwise, we solve a problem equivalent to the one studied by Kao et al.~\cite{Kao2001}.

\subparagraph{Our contribution.}
We propose and analyze algorithms for finding largest weight common subtree 
embeddings. Our method requires to solve a series of bipartite matching instances
as subproblem, which dominates the total running time. 
We build on recent results by Ramshow and Tarjan~\cite{RaTa12} for unbalanced matchings.
Let $T$ and $T'$ be labeled rooted trees with $k:=|T|$ and $l:=|T'|$ vertices, respectively, and $\Delta:=\min\{\Delta(T),\Delta(T')\}$ the smaller degree of the two input trees.
For real-valued weight functions we prove a time bound of $\mathcal O(kl\Delta)$.
For integral weights bounded by a constant $C$ we prove a running time of 
$\mathcal O(kl\sqrt \Delta\log(\min\{k,l\}C))$.
This is an improvement over the algorithm by Kao et al.\,\cite{Kao2001} if there 
are only few labels and the maximum degree of one tree is much smaller than the 
maximum degree of the other.
In addition, we support weights and a linear penalty for skipped vertices.

Moreover, the algorithm by Kao et al.\,\cite{Kao2001} is designed for rooted 
trees only. A straight forward approach to solve the problem for unrooted 
trees is to try out all pairs of possible roots, which results in an additional 
$\mathcal O(kl)$ factor.
However, our algorithm exploits the fact that there are many similar matching instances
using techniques related to~\cite{Chung1987,DrKrMu2016}.
This includes computing additional matchings of cardinality two.
For unrooted trees and real-valued weight functions we prove the same $\mathcal O(kl\Delta)$ time bound as for rooted trees.
This leads to an improvement over the formally best algorithm for solving the maximum 
common subtree problem, for which a time bound  of $\mathcal O(kl\,(\Delta + \log d))$ has been proven~\cite{DrKrMu2016}.

\section{Preliminaries}
\label{sec:prel}
We consider finite simple undirected graphs.
Let $G=(V,E)$ be a graph, we refer to the set of \emph{vertices} $V$ by $V(G)$ 
or $V_G$ and to the set of \emph{edges} by $E(G)$ or $E_G$. An edge connecting 
two vertices $u, v \in V$ is denoted by $uv$ or $vu$. The \emph{order} $|G|$ of 
a graph $G$ is its number of vertices.
The \emph{neighbors} of a vertex $v$ are defined as $N(v):=\{u\in V_G\mid vu\in E_G\}$.
The \emph{degree} of a vertex $v\in V_G$ is $\delta(v):=|N(v)|$, the \emph{degree} $\Delta(G)$ of a graph $G$ is the maximum degree of its vertices.

A \emph{path} $P$ is a sequence of pairwise disjoint vertices connected through edges and denoted as $P=(v_0,e_1,v_1,\ldots, e_l,v_l)$.
We alternatively specify the vertices $(v_0,\ldots,v_l)$ or edges $(e_1,\ldots,e_l)$ only.
The \emph{length} of a path is its number of edges.
A connected graph with a unique path between any two vertices is a \emph{tree}.
A tree $T$ with an explicit root vertex $r\in V(T)$ is called \emph{rooted} tree, denoted by $T^r$.
In a rooted tree $T^r$ we denote the set of \emph{children} of a vertex $v$ by $C(v)$
and its \emph{parent} by $p(v)$, where $p(r) = r$.
For any tree $T$ and two vertices $u,v\in V(T)$ the \emph{rooted subtree} $T^u_v$ is induced by the vertex $v$ and its descendants related to the tree $T^u$.
If the root $r$ is clear from the context, we may abbreviate $T_v:=T^r_v$.
We refer to the root of a rooted tree $T$ by $r(T)$.

If the vertices of a graph $G$ can be separated into exactly two disjoint sets $V,\ U$ such that $E(G)\subseteq V\times U$, then the graph is called \emph{bipartite}. In many cases the disjoint sets are already given as part of the input. In this case we write $G=(V\sqcup U,E)$, where $E\subseteq V\times U$. 
 
For a graph $G=(V,E)$ a \emph{matching} $M\subseteq E$ is a set of edges, such that no two edges share a vertex.
For an edge $uv\in M$, the vertex $u$ is the \emph{partner} of $v$ and vice versa.
The \emph{cardinality of a matching} $M$ is its number of edges $|M|$.
A \emph{weighted graph} is a graph endowed with a function $\wMatch:E\to\mathbb R$.
The weight of a matching $M$ in a weighted graph is $\WMatch(M):=\sum_{e\in M} \wMatch(e)$.
We call a matching $M$ of a weighted graph $G$ a \emph{maximum weight matching} (\MWM) if there is no other matching $M'$ of $G$ with $\WMatch(M')>\WMatch(M)$.
A matching $M$ of $G$ is a \emph{\MWM of cardinality $k$} (\MWMk{k}) if there is no other \MWMk{k} $M'$ of $G$ with $\WMatch(M')>\WMatch(M)$.

For convenience we define the maximum of an empty set as $-\infty$.

\section{Gupta and Nishimura's algorithm}
\label{sec:les}
In this section we formally define a \emph{Largest Common Subtree Embedding} (\LES) and present a brief overview of Gupta and Nishimura's algorithm to compute such an embedding. The following two definitions are based on~\cite{GuNi1998}.
\begin{definition}[Topological Embedding]
\label{def:topemb}
A rooted tree $T$ is \emph{topologically embeddable} in a rooted tree $T'$ if there is an injective function $\psi:V(T)\to V(T')$, such that $\forall a,b,c\in V(T)$
\begin{enumerate}
\item [i)]If $b$ is a child of $a$, then $\psi(b)$ is a descendant of $\psi(a)$.
\item [ii)]For distinct children  $b,c$ of $a$, the paths from $\psi(a)$ to $\psi(b)$ and from $\psi(a)$ to $\psi(c)$ have exactly $\psi(a)$ in common.
\end{enumerate}
$T$ is \emph{root-to-root topologically embeddable} in $T'$, if $\psi(r(T))=r(T')$.
\end{definition}
\begin{definition}[(Largest) Common Subtree Embedding; (L)\ES]
\label{def:les}
Let $T$ and $T'$ be rooted trees and $S$ be topologically embeddable in both $T$ and $T'$.
For such a $S$ let $\psi:V(S)\to V(T)$ and $\psi':V(S)\to V(T')$ be topological embeddings.
\begin{itemize}
\item Then $\varphi:=\psi'\circ\psi^{-1}$ is a \emph{Common Subtree Embedding}.
\item If there is no other tree $S'$ topologically embeddable in both $T$ and $T'$ with $|S'|>|S|$, then $S$ is a \emph{Largest Common Embeddable Subtree} and $\varphi$ is a \emph{Largest Common Subtree Embedding}.
\item An \ES with $\varphi(r(T))=r(T')$ is a \emph{root-to-root \ES}.
\item A root-to-root \ES is \emph{largest}, if it is of largest weight among all root-to-root common subtree embeddings.
\end{itemize}
\end{definition}
 
\subparagraph{Algorithm from Gupta and Nishimura.}
\label{sect:algGN}
Gupta and Nishimura~\cite{GuNi1998} presented an algorithm to compute the size of a largest common embeddable subtree based on dynamic programming, which is similar to the computation of a largest common subtree, described in, e.g.,\cite{Matula1978,DrKrMu2016}.
Let $T$ and $T'$ be rooted trees and $\TabL$ be a table of size $|T||T'|$.
For each pair of vertices $u\in T, v\in T'$ the value $\TabL(u,v)$ stores the size of a \LES between the rooted subtrees $T_u$ and $T'_v$.
Gupta and Nishimura proved, that an entry $\TabL(u,v)$ is determined by the maximum of the following three quantities.
\begin{itemize}
\item $M_1=\max\{\TabL(u,c)\mid c\in C(v)\}$
\item $M_2=\max\{\TabL(b,v)\mid b\in C(u)\}$
\item $M_3=\WMatch(M)+1$, where $M$ is a \MWM of the complete bipartite graph $(C(u)\sqcup C(v),C(u)\times C(v))$ with edge weight $\wMatch(bc)=\TabL(b,c)$ for each pair $(b,c)\in C(u)\times C(v)$
\end{itemize}
Here, $M_1$ represents the case, where the vertex $v$ is not mapped.
To satisfy ii) from Def.\,\ref{def:topemb}, we may map at most one child $c\in C(v)$.
$M_2$ represents the case, where $u$ is not mapped and at most one child $b\in C(u)$ is allowed.
$M_3$ represents the case $\varphi(u)=v$.
To maximize the number of mapped descendants we compute a maximum weight matching, where the children of $u$ and $v$ are the vertex sets $C(u)$ and $C(v)$, respectively, of a bipartite graph.
The edge weights are determined by the previously computed solutions, i.e., the {\LES}s between the children of $u$ and $v$ and their descendants, namely between $T_b$ and $T'_c$ for each pair of children $(b,c)$.
The algorithm proceeds from the leaves to the roots.
From the above recursive formula, we get $\TabL(u,v)=1$ if $u$ or $v$ is a leaf, which was separately defined in~\cite{GuNi1998}.

A maximum value in the table yields the size of an \LES.
We obtain the size of a root-to-root \LES from $M_3$ of the root vertices $r(T),r(T')$.
Note, with storing $\mathcal O(|\TabL|)$ additional data, it is easy to obtain a (root-to-root) \LES $\varphi$.
\begin{theorem}[Gupta, Nishimura, \cite{GuNi1998}]
\label{theorem:gupta}
Computing an \LES between two rooted trees of order at most $n$ is possible in time $\mathcal O(n^{2.5}\log n)$.
\end{theorem}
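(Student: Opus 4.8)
The recurrence for $\TabL(u,v)$, together with the base case $\TabL(u,v)=1$ whenever $u$ or $v$ is a leaf, already establishes correctness (this is the content attributed to Gupta and Nishimura), so the plan is to bound only the running time. Since the table has $|T|\,|T'|\le n^2$ entries and is filled bottom-up from the leaves, I would argue that the whole computation is dominated by the maximum weight matchings evaluated for the quantity $M_3$, and account for everything else as lower-order overhead.

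First I would dispose of the cheap parts. For a fixed pair $(u,v)$, evaluating $M_1$ costs $\mathcal O(|C(v)|)$, evaluating $M_2$ costs $\mathcal O(|C(u)|)$, and assembling the edge weights of the bipartite instance used for $M_3$ costs $\mathcal O(|C(u)|\,|C(v)|)$. Summing over all pairs and using that $\sum_{u\in V(T)}|C(u)| = |T|-1 < n$ (each non-root vertex is the child of exactly one parent), together with the factorization $\sum_{u,v}|C(u)|\,|C(v)| = \bigl(\sum_{u}|C(u)|\bigr)\bigl(\sum_{v}|C(v)|\bigr) < n^2$, all these contributions are $\mathcal O(n^2)$.

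The heart of the argument is the total matching cost. I would invoke a scaling algorithm for the assignment problem (Gabow--Tarjan), which computes a \MWM on a bipartite graph with $N$ vertices, $m$ edges, and integral weights bounded by $C$ in time $\mathcal O(\sqrt{N}\,m\log(NC))$. For the instance at $(u,v)$ we have $N=|C(u)|+|C(v)|$ and $m=|C(u)|\,|C(v)|$; since every table entry is an \LES size it is an integer in $[1,n]$, so $C\le n$ and $\log(NC)=\mathcal O(\log n)$. Bounding $\sqrt{N}=\sqrt{|C(u)|+|C(v)|}\le\sqrt{2n}$ uniformly, the total matching time is
\[
\mathcal O\!\left(\log n\sum_{u,v}\sqrt{|C(u)|+|C(v)|}\;|C(u)|\,|C(v)|\right)
=\mathcal O\!\left(\sqrt{n}\,\log n\sum_{u,v}|C(u)|\,|C(v)|\right),
\]
and the same factorization as above bounds the remaining double sum by $n^2$, giving $\mathcal O(n^{2.5}\log n)$.

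The main obstacle, and the only nonroutine step, is this amortized accounting: the naive per-instance bound looks as though it could blow up, and the point is that both the square-root factor (bounded crudely by $\sqrt{n}$) and the edge count (whose double sum factorizes into a product of two sums, each equal to the number of edges of a tree) must be handled globally rather than instance by instance. I would also verify carefully that the weights are genuinely bounded by $n$ so that the $\log(NC)$ term of the scaling algorithm collapses to $\mathcal O(\log n)$; this rests on each $\TabL(b,c)$ being the size of a subtree of a tree of order at most $n$. Combining the $\mathcal O(n^2)$ overhead with the $\mathcal O(n^{2.5}\log n)$ matching cost then yields the claimed bound.
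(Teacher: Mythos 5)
Your proof is correct and follows essentially the same route as the source: the paper states Theorem~\ref{theorem:gupta} as an imported result of Gupta and Nishimura, describing only the recursion ($M_1$, $M_2$, $M_3$) and delegating the running-time analysis to the citation, and your analysis reconstructs precisely that standard argument. In particular, your accounting---$\mathcal O(n^2)$ for the non-matching work, the Gabow--Tarjan scaling bound with integral weights at most $n$ supplying the $\log n$ factor, and the global bound via $\sqrt{|C(u)|+|C(v)|}\le\sqrt{2n}$ together with $\sum_{u,v}|C(u)|\,|C(v)|=\bigl(\sum_u|C(u)|\bigr)\bigl(\sum_v|C(v)|\bigr)<n^2$---is exactly how the $\mathcal O(n^{2.5}\log n)$ bound is obtained.
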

\section{Largest Weight Common Subtree Embeddings}
First, we introduce weighted common subtree embeddings between labeled trees.
Part of the input is an integral or real weight function on all pairs of the labels.
Next, we consider a linear distance penalty for skipped vertices in the input trees.
After formalizing the problem and presenting an algorithm, we prove new upper time bounds.

\subparagraph{Vertex Labels.}
In many application domains the vertices of the trees need to be distinguished.
A common representation of a vertex labeled tree $T$ is $(T,l)$, where $l:V(T)\to \Sigma$ with $\Sigma$ as a finite set of labels.
Let $\wIso:\Sigma\times \Sigma\to \mathbb{R}\cup \{-\infty\}$ assign a weight to each pair of labels.
Instead of maximizing the number of mapped vertices, we want to maximize the sum of the weights $\wIso(l(u),l(\varphi(u)))$ of all vertices $u$ mapped by a common subtree embedding $\varphi$.
For simplicity, we will omit $l$ and $l'$ for the rest of this paper and define $ \wIso(u,v):=\wIso(l(u),l'(v))$ for any two vertices $(u,v)\in V(T)\times V(T')$.

\subparagraph{Edge Labels.}
Although not as common as vertex labels, edge labels are useful to represent different bonds between atoms or relationship between individuals.
In a common subtree embedding we do not map edges to edges but paths to paths.
Since in an embedding inner vertices on mapped paths do not contribute to the weight, we do the same with edges.
I.e., both paths need to have length 1 for their edge labels to be considered.
Here again, we want to maximize the weight $\wIso(e,e'):=\wIso(l(e),l'(e'))$ of these edges mapped to each other (additional to the weight of the mapped vertices).

\subparagraph{Distance Penalties.}
Depending on the application purpose it might be desirable that paths do not have arbitrary length.
Here, we introduce a linear distance penalty for paths of length greater than 1.
I.e., each inner vertex on a path corresponding to an edge of the common embeddable subtree lowers the weight by a given penalty $p$.
By assigning $p$ the value $\infty$ we effectively compute a maximum common subtree.
The following definition formalizes an \LES under a weight function $\wIso$ and a distance penalty $p$.

\begin{definition}[Largest Weight Common Subtree Embedding; \WES]
\label{def:les_weight_dist}
Let $(T,l)$ and $(T',l')$ be rooted vertex and/or edge labeled trees.
Let $\varphi$ be a common subtree embedding from $T$ to $T'$.
Let $\wIso:\Sigma\times \Sigma\to \mathbb{R}\cup \{-\infty\}$ assign a weight to each pair of labels.
Let $p \in \mathbb{R}^{\geq 0}\cup\{\infty\}$ be a distance penalty.
We refer to a path $P_1=(u_0,e_1,u_1,\ldots,u_k)$ in the tree $T$ corresponding to a single edge in the common embeddable subtree as \emph{topological path}.
Let $\varphi(P_1)$ be the corresponding path $(v_0,e_1',v_1,\ldots,v_l)$ in $T'$. Then
\begin{itemize}
\item $\wPath(P_1,\varphi(P_1))=\wIso(e_1,e_1'):=\wIso(l(e_1),l'(e_1'))$, if $l=k=1$, or
\item $\wPath(P_1,\varphi(P_1))=-p\cdot(l+k-2)$, otherwise.
\item The \emph{weight} $\WIso(\varphi)$ is the sum of the weights $\wIso(u,\varphi(u))$ of all vertices $u$ mapped by $\varphi$ plus the weights $\wPath(P,\varphi(P))$  of all topological paths $P$.
\item If $\varphi$ is of largest weight among all common subtree embeddings, then $\varphi$ is a \emph{Largest Weight Common Subtree Embedding} (\WES).
\end{itemize}
\end{definition}

The definition of root-to-root \WES is analogue to Def.\,\ref{def:les}.
A closer look at the definition of $\wPath$ reveals that each inner vertex (the skipped vertices) on a topological path or its mapped path subtracts $p$ from the embedding's weight.
Fig.\,\ref{subfig:les_edges} illustrates two weighted common subtree embeddings.

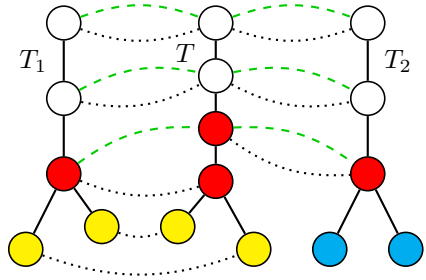
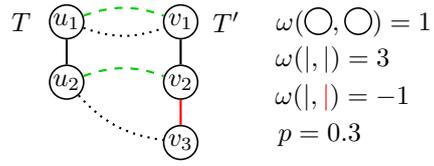
\begin{figure}
\centering
\begin{subfigure}[b]{0.48\textwidth}
\centering
\begin{tikzpicture}%
\node at (-0.4,2.5) {$T_1$};
\node[vertexWhiteLarge] (s1) at (0,3) {};
\node[vertexWhiteLarge] (s2) at (0,2) {};
\node[vertexWhiteLarge,fill=red] (s3) at (0,1) {};
\node[vertexWhiteLarge,fill=yellow] (s4) at (-0.5,0) {};
\node[vertexWhiteLarge,fill=yellow] (s5) at (0.5,0.3) {};

\draw[edge] (s1) -- (s2)--(s3)--(s4);
\draw[edge] (s3)--(s5);

\node at (1.6,2.6) {$T$};
\node[vertexWhiteLarge] (t1) at (2,3) {};
\node[vertexWhiteLarge] (t2) at (2,2.3) {};
\node[vertexWhiteLarge,fill=red] (t2b) at (2,1.6) {};
\node[vertexWhiteLarge,fill=red] (t3) at (2,0.9) {};
\node[vertexWhiteLarge,fill=yellow] (t4) at (1.5,0.3) {};
\node[vertexWhiteLarge,fill=yellow] (t5) at (2.5,0) {};

\draw[edge] (t1) -- (t2)--(t2b)--(t3)--(t4);
\draw[edge] (t3)--(t5);

\node at (4.4,2.5) {$T_2$};
\node[vertexWhiteLarge] (a1) at (4,3) {};
\node[vertexWhiteLarge] (a2) at (4,2) {};
\node[vertexWhiteLarge,fill=red] (a3) at (4,1) {};
\node[vertexWhiteLarge,fill=cyan] (a4) at (3.5,0) {};
\node[vertexWhiteLarge,fill=cyan] (a5) at (4.5,0) {};

\draw[edge] (a1) -- (a2)--(a3)--(a4);
\draw[edge] (a3)--(a5);

\draw[edgeIsoGreen] (s1) edge (t1);
\draw[edgeIsoGreen] (s2) edge (t2);
\draw[edgeIsoGreen] (s3) edge (t2b);
\draw[edgeIsoGreen] (t1) edge (a1);
\draw[edgeIsoGreen] (t2) edge (a2);
\draw[edgeIsoGreen] (t2b) edge (a3);

\draw[edgeIso] (s1) edge (t1);
\draw[edgeIso] (s2) edge (t2);
\draw[edgeIso] (s3) edge (t3);
\draw[edgeIso] (s4) edge (t5);
\draw[edgeIso] (s5) edge (t4);
\draw[edgeIso] (t1) edge (a1);
\draw[edgeIso] (t2) edge (a2);
\draw[edgeIso] (t2b) edge (a3);
\end{tikzpicture}
\subcaption{Labeled MCS (green, dashed) and \LES (black, dotted) between $T$ and $T_i$, $i\in \{1,2\}$.}
\label{subfig:mcsles}
\end{subfigure}
\ \ 
\begin{subfigure}[b]{0.49\textwidth}
\centering
\begin{tikzpicture}
\node at (4.78,4.5) {$\wIso(\tikz[baseline=-0.75ex]{\node[vertexWhite] {};}
,\tikz[baseline=-0.75ex]{\node[vertexWhite] {};})=1$};
\node at (4.5,4) {$\wIso(|,|)=3$};
\node at (4.63,3.5) {$\wIso(|,\textcolor{red}{|})=-1$};
\node at (4.33,3) {$p=0.3$};

\node at (0.4,4.5) {$T$};
\node[vertexWhiteLarge] (ta) at (1.0,4.5) {};
\node at (ta) {$u_1$};
\node[vertexWhiteLarge] (tb) at (1.0,3.7) {};
\node at (tb) {$u_2$};
\draw[edge] (ta) -- (tb);

\node at (3.1,4.5) {$T'$};
\node[vertexWhiteLarge] (s1) at (2.5,4.5) {};
\node at (s1) {$v_1$};
\node[vertexWhiteLarge] (s2) at (2.5,3.7) {};
\node at (s2) {$v_2$};
\node[vertexWhiteLarge] (s3) at (2.5,2.9) {};
\node at (s3) {$v_3$};
\draw[edge] (s1) -- (s2);
\draw[edge,draw=red] (s2) -- (s3);
\draw[edgeIso] (ta) edge (s1);
\draw[edgeIsoGreen] (ta) edge (s1);
\draw[edgeIso] (tb) edge (s3);
\draw[edgeIsoGreen] (tb) edge (s2);
\end{tikzpicture}
\subcaption{The black embedding has weight 1.7, since the vertex $v_2$ is skipped and therefore the penalty $p$ is applied; the weight between the edges is not added. The green embedding has weight 5, 2 from the vertices, 3 from the topological paths of length 1.}
\label{subfig:les_edges}
\end{subfigure}
\caption{\textbf{\subref{subfig:mcsles})} Although 'intuitively' $T$ is more similar to $T_1$ than to $T_2$, both MCSs have size 3. However, the \LES between $T$ and $T_1$ has 6 mapped vertices. \textbf{\subref{subfig:les_edges})} Two weighted embeddings; one with a skipped vertex, the other where the edge labels contribute to the weight.}
\label{fig:mcs_edge}
\end{figure}

\subparagraph{The dynamic programming approach.}
To compute a \WES, we need to store some additional data during the computation.
In Gupta and Nishimura's algorithm there is a table $\TabL$ of size $|T||T'|$ to store the weight of \LES{}s between subtrees of the input trees.
In our algorithm we need a table $\TabL$ of size $2|T||T'|$.
An entry $\TabL(u,v,t)$ stores the weight of a \WES between the rooted subtrees $T_u$ and $T'_v$ of \emph{type} $t\in\{\trtr,\tsk\}$.
Type \trtr represents a \emph{root-to-root} embedding between $T_u$ and $T'_v$; \tsk an embedding, where $u$ or $v$ is \emph{skipped}.
Skipped in the sense, that at least on of $u,v$ will be an inner vertex when mapping some ancestor nodes during the dynamic programming.
For type \tsk we subtract the penalty $p$ from the weight for the skipped vertices before storing it in our table.
We obtain the weight of a \WES and a root-to-root \WES, respectively, from the maximum value of type \trtr and from $\TabL(r(T),r(T'),\trtr)$, respectively.
The following lemma specifies the recursive computation of an entry $L(u,v,t)$.
\begin{lemma}
\label{lemma:recursion}
Let $u\in V(T)$ and $v\in V(T')$. 
For $\tin$ let $M^T_{t}=\max\{\TabL(b,v,t)\mid b\in C(u)\}$ and $M^{T'}_{t}=\max\{\TabL(u,c,t)\mid c\in C(v)\}$. Then
\begin{itemize}
\item $\TabL(u,v,\tsk)=\max\{M^T_{\tsk},M^T_{\trtr},M^{T'}_{\tsk},M^{T'}_{\trtr}\}-p$
\end{itemize}
Let $G=(C(u)\sqcup C(v),C(u)\times C(v))$ be a bipartite graph with edge weights $\wMatch(bc)=\max\{\TabL(b,c,\tsk),\TabL(b,c,\trtr)+\wIso(ub,vc)\}$  for each pair $(b,c)\in C(u)\times C(v)$. Then
\begin{itemize}
\item $\TabL(u,v,\trtr)=\wIso(u,v)+\WMatch(M)$, where $M$ is a \MWM on $G$.
\end{itemize}
\end{lemma}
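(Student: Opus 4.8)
The plan is to prove the two recurrences separately, in each case showing that the stated value is both achievable (an embedding of that weight exists) and optimal (no embedding of the given type can do better). Throughout I would fix $u\in V(T)$ and $v\in V(T')$ and reason about how an optimal embedding of the relevant type between $T_u$ and $T'_v$ must behave at the roots $u,v$, then invoke the table entries for strictly smaller subtrees, which by induction already hold the correct \WES weights (the induction runs from leaves to roots, exactly as in Gupta and Nishimura's algorithm).

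For the type-\tsk entry, the defining property is that at least one of $u,v$ is skipped, i.e.\ becomes an inner vertex of a topological path when some ancestor is mapped. I would argue that in any such embedding the entire mapped structure must be contained in a single rooted subtree $T_b$ for some child $b\in C(u)$ (if $u$ is skipped) or symmetrically in some $T'_c$ (if $v$ is skipped): condition ii) of Def.\,\ref{def:topemb} forces the paths leaving the skipped vertex toward its image to be essentially unique, so the embedding cannot branch at the skipped root and must descend into exactly one child. The weight contributed by this sub-embedding is precisely a table entry $\TabL(b,v,t)$ or $\TabL(u,c,t)$ for $t\in\{\trtr,\tsk\}$, and skipping the root costs exactly one penalty $p$. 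Taking the best over all four combinations $M^T_{\tsk},M^T_{\trtr},M^{T'}_{\tsk},M^{T'}_{\trtr}$ and subtracting $p$ yields the formula; achievability follows by exhibiting the embedding that realizes the maximizing choice.

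For the type-\trtr entry we have $\varphi(u)=v$, contributing the vertex weight $\wIso(u,v)$. The children of $u$ that participate must map into pairwise vertex-disjoint subtrees hanging below $v$ (again by ii)), and by i) each such child $b$ must be routed to a descendant of $v$. The key structural observation is that each used child $b\in C(u)$ is "assigned" to at most one child $c\in C(v)$, in the sense that the topological path from $v$ toward $\varphi(b)$ enters $T'_v$ through exactly one child subtree $T'_c$; this yields a matching $M$ in the complete bipartite graph $G$ on $C(u)\sqcup C(v)$. The weight a matched pair $(b,c)$ can contribute is exactly $\wMatch(bc)=\max\{\TabL(b,c,\tsk),\TabL(b,c,\trtr)+\wIso(ub,vc)\}$: the first term covers the case where $b$ (or $c$) is skipped so the edge $ub$ (resp.\ $vc$) is an inner segment of a topological path contributing no edge weight, and the second covers the root-to-root case where $b$ maps to $c$ along a length-1 topological path, so the edge weight $\wIso(ub,vc)$ is added. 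Summing over the matching gives $\WMatch(M)$, and optimality of the \MWM on $G$ gives the best total; adding $\wIso(u,v)$ completes the formula.

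The main obstacle is the correspondence between embeddings and matchings in the type-\trtr case, and in particular justifying that an optimal embedding induces a genuine matching (each $c\in C(v)$ receives at most one routed child of $u$) and, conversely, that an arbitrary matching $M$ can be realized by disjoint topological paths without violating condition ii). The forward direction needs the disjointness clause of ii) applied at $v$ to show two distinct children of $u$ cannot be routed through the same child subtree $T'_c$; the backward direction needs to verify that the vertex-disjoint child subtrees $T'_c$ give enough room to route the paths independently, so that combining the per-pair optimal sub-embeddings (from the table entries defining $\wMatch$) yields a valid global embedding of the claimed weight. The penalty bookkeeping also deserves care: I would check that the single inner-vertex penalties already folded into the type-\tsk table entries, together with the edge-weight/penalty dichotomy encoded in $\wMatch$, account for every skipped vertex exactly once and never double-count the roots $u,v$ themselves.
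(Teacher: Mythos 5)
Your proposal is correct and takes essentially the same route as the paper's own proof: the type-\tsk entry is justified by observing that a skipped root forces the embedding into a single child subtree at the cost of one penalty $p$, and the type-\trtr entry via the correspondence, enforced by condition ii) of Def.\,\ref{def:topemb}, between embeddings rooted at $(u,v)$ and matchings in $G$, with \wMatch encoding the dichotomy between a length-1 root-to-root pairing (edge weight added) and a skipped child (penalties already folded into the \tsk table entry). Your write-up is in fact more explicit than the paper's brief argument, which asserts these correspondences without separating achievability from optimality.
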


\begin{proof}
The type $\tsk$ represents the case of an embedding between $T_u$ and $T'_v$ which is not root-to-root.
From the definition of $M^T_t$ the vertex $u$ is skipped and from the definition of $M^{T'}_t$ the vertex $v$ is skipped, so it is indeed not root-to-root.
Since either $u$ or $v$ was skipped, we subtract the penalty $p$.
This ensures we have taken inner vertices of later steps of the dynamic programming into account.

The type $\trtr$  implies that $u$ is mapped to $v$.
Each edge in $G$ represents the weight of a \WES from one child of $u$ to one child of $v$.
A maximum matching yields the best combination which satisfies Def.\,\ref{def:topemb}.
If a child $b$ of $u$ is mapped to a child $c$ of $v$, the paths $bu$ and $cv$ have length one.
Then from Def.\,\ref{def:les_weight_dist} we have to add the weight $\wIso(bu,cv)$.
Otherwise at least one path has length greater than one and we have to subtract the distance penalty $p$ for each inner vertex.
We already did that while computing $\TabL(b,c,\tsk)$.
\end{proof}

\subparagraph{Time and space complexity.}
We next analyze upper time and space bounds.
Thereby we distinguish between real- and integer-valued weight functions $\wIso$.
If we use dynamic programming starting from the leaves to the roots, we need to compute each value $\TabL(u,v,t)$ only once.

\begin{theorem}
\label{theorem:TimeSpaceRootedLES}
Let $T$ and $T'$ be rooted vertex and/or edge labeled trees.
Let $\wIso$ be a weight function, $\Delta=\min \{\Delta(T),\Delta(T')\}$, and $p$ be a distance penalty.
\begin{itemize}
\item A \WES between $T$ and $T'$ can be computed in time $\mathcal{O}(|T|\,|T'|\Delta\}$ and space $\mathcal O(|T|\,|T'|)$.
\item If the weights are integral and bounded by a constant $C$, a \WES can be computed in time $\mathcal O(|T|\,|T'|\sqrt\Delta\log (C\min\{|T|,|T'|\}))$.
\end{itemize}
\end{theorem}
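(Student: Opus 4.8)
The plan is to observe that both bounds follow from a single accounting argument once we have a good per-instance bound for the bipartite matchings that drive the recursion of Lemma~\ref{lemma:recursion}. Filling the table $\TabL$ bottom-up, each of the $2kl$ entries is computed once, where $k=|T|$, $l=|T'|$. The space bound is immediate: $\TabL$ has size $2kl=\mathcal O(kl)$, the auxiliary arrays of a single matching use $\mathcal O(|C(u)|+|C(v)|)=\mathcal O(\Delta(T)+\Delta(T'))$ space and are reused, and the traceback data needed to reconstruct an actual \WES amounts to one matching per \trtr-entry, which (by the degree sum below) totals $\mathcal O(kl)$. For the running time, the two \tsk-entries at $(u,v)$ cost $\mathcal O(|C(u)|+|C(v)|)$ for the four maxima, and $\sum_{u,v}(|C(u)|+|C(v)|)=l\sum_u|C(u)|+k\sum_v|C(v)|=l(k-1)+k(l-1)=\mathcal O(kl)$ since $\sum_u|C(u)|=k-1$ and $\sum_v|C(v)|=l-1$. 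Hence the time is dominated by the \MWM computations for the \trtr-entries, and it remains to bound $\sum_{u,v}\textrm{cost}(u,v)$, where $\textrm{cost}(u,v)$ is the time to compute a \MWM on the complete bipartite graph $G$ with sides $C(u),C(v)$.

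The key is to solve each instance in a way that exploits its imbalance, i.e.\ whose cost scales with $\min\{|C(u)|,|C(v)|\}$ rather than with the larger side. For real weights I would use the array-based Hungarian method specialized to a complete bipartite graph: writing $a=\min\{|C(u)|,|C(v)|\}$ and $b=\max\{|C(u)|,|C(v)|\}$, a maximum weight matching is obtained by at most $a$ successive shortest-augmenting-path phases (taking the best matching over all cardinalities $0,\dots,a$), and each phase runs without a heap in $\mathcal O(ab)$ time, since it settles at most $a$ vertices of the small side, each relaxing its $b$ incident edges, and performs at most $a$ linear min-slack scans over the $\le b$ unsettled vertices. This yields $\textrm{cost}(u,v)=\mathcal O(|C(u)|\,|C(v)|\min\{|C(u)|,|C(v)|\})$ with no logarithmic factor. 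For integral weights bounded by $C$ I would instead invoke the cost-scaling algorithm of Ramshaw and Tarjan~\cite{RaTa12} for unbalanced assignment, which on an instance with $m$ edges and smaller side of size $a$ runs in $\mathcal O(m\sqrt a\,\log(aC))$; with $m=|C(u)|\,|C(v)|$ this gives $\textrm{cost}(u,v)=\mathcal O(|C(u)|\,|C(v)|\sqrt{\min\{|C(u)|,|C(v)|\}}\,\log(\min\{|C(u)|,|C(v)|\}\,C))$.

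Both bounds then fall out of the same summation using the two facts $\min\{|C(u)|,|C(v)|\}\le\min\{\Delta(T),\Delta(T')\}=\Delta$ (because $|C(u)|\le\Delta(T)$ and $|C(v)|\le\Delta(T')$) and $\Delta\le\min\{k,l\}$. In the real case, $\sum_{u,v}|C(u)|\,|C(v)|\min\{|C(u)|,|C(v)|\}\le \Delta\sum_{u,v}|C(u)|\,|C(v)|=\Delta\big(\sum_u|C(u)|\big)\big(\sum_v|C(v)|\big)=\Delta(k-1)(l-1)=\mathcal O(kl\Delta)$. In the integral case, pulling the monotone factors out as $\sqrt\Delta\,\log(\min\{k,l\}\,C)$ and using the same product of degree sums gives $\mathcal O(kl\sqrt\Delta\,\log(\min\{k,l\}\,C))$.

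The main obstacle I anticipate is the per-instance matching bound rather than the bookkeeping. Two points need care. First, the imbalance must genuinely be exploited: a black-box $\mathcal O(n^3)$ or heap-based bound measured on the larger side would inflate the total past $\mathcal O(kl\Delta)$, so I must argue that the phase count is the \emph{smaller} side $a$ and that each phase is $\mathcal O(ab)$ and heap-free. Second, Lemma~\ref{lemma:recursion} asks for an unconstrained-cardinality \MWM under weights that may be negative or $-\infty$ (with $p$ possibly $\infty$), whereas Ramshaw--Tarjan state their result for a minimum-cost assignment that saturates the smaller side; I would reconcile this by treating $-\infty$ edges as absent and reading off the best matching over all intermediate cardinalities produced during the successive-augmentation/scaling process, which leaves the cited time bounds unchanged.
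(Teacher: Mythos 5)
Your proposal has the same skeleton as the paper's proof: fill \TabL bottom-up, reduce each \trtr-entry to a bipartite matching whose cost scales with the \emph{smaller} side $\min\{|C(u)|,|C(v)|\}$, and then sum via $\sum_{u,v}|C(u)|\,|C(v)|\min\{|C(u)|,|C(v)|\}\leq\Delta\bigl(\sum_u|C(u)|\bigr)\bigl(\sum_v|C(v)|\bigr)=\mathcal O(kl\Delta)$. Your real-weight case is sound; where the paper simply cites the unbalanced-matching bound of Lemma~\ref{lemma:matching_real_single} (from~\cite{RaTa12_TR}, $\mathcal O(s^2t)$ on complete bipartite graphs), you re-derive it with an explicit heap-free Hungarian implementation, which buys nothing but costs nothing.

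The integral case, however, has two genuine gaps. First, you invoke the scaling bound with the factor $\log(aC)$, i.e.\ you treat the edge weights of each matching instance as bounded by $C$. They are not: by Lemma~\ref{lemma:recursion} an edge weight is $\max\{\TabL(b,c,\tsk),\TabL(b,c,\trtr)+\wIso(ub,vc)\}$, an \emph{accumulated} weight of an entire subtree embedding, which can be as large as roughly $2C\min\{k,l\}$. This is precisely the point of the paper's step introducing $\bar C:=2C\cdot\min\{|T|,|T'|\}$ before applying Lemma~\ref{lemma:matching_int_single} (\cite{Goldberg2017}). The theorem's bound still comes out because $\log(a\bar C)=\mathcal O(\log(C\min\{k,l\}))$, but your per-instance invocation rests on a false premise, and taken at face value it would even "prove" the stronger, unestablished bound $\mathcal O(kl\sqrt\Delta\log(\Delta C))$. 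Second, your reconciliation of unrestricted-cardinality \MWM with the saturating assignment formulation does not work for scaling algorithms: reading off the best matching over all intermediate cardinalities is legitimate for successive shortest augmenting paths, where the matching after $k$ augmentations is a maximum weight matching of cardinality $k$, but a weight-scaling algorithm maintains only approximately optimal matchings until the final scale, so its intermediate matchings carry no per-cardinality optimality guarantee (the paper's conclusion even remarks that weight-scaling matching algorithms do not work incrementally~\cite{RaTa12}). The standard repair, which the paper uses in the analogous construction of Lemma~\ref{lemma:speedup_Matching}, is to add a zero-weight copy edge for every vertex of the smaller side and solve the resulting saturating instance; this changes $m$ and $a$ only by constants, so the time bound is unharmed, but the argument must be made. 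Both gaps are repairable without altering your outline, and both repairs are exactly the ingredients the paper's own proof supplies.
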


We first need to provide two results regarding maximum weighted matchings.

\begin{lemma}[\cite{RaTa12_TR}]
\label{lemma:matching_real_single}
Let $G$ be a weighted bipartite graph containing $m$ edges and vertex sets of sizes $s$ and $t$.
W.l.o.g. $s\leq t$.
We can compute a \MWM on $G$ in time $\mathcal O(ms+s^2\log s)$ and space $\mathcal O(m)$.
If $G$ is complete bipartite, we may simplify the time bound to $\mathcal O(s^2t)$.
\end{lemma}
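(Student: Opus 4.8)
The plan is to separate the two assertions. The general running time $\mathcal O(ms+s^2\log s)$ together with the linear space bound is inherited almost verbatim from the analysis of the unbalanced assignment problem by Ramshaw and Tarjan~\cite{RaTa12_TR}; the complete bipartite estimate $\mathcal O(s^2t)$ then follows by a one-line specialization. I would therefore spend the bulk of the argument recalling their algorithmic backbone and adapting it to a maximum weight matching of arbitrary cardinality, and close with the arithmetic simplification.

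First I would recall that a \MWM is obtained by the successive shortest augmenting path method: starting from the empty matching one repeatedly augments along a path that is shortest with respect to a maintained system of reduced weights, so that after the $k$-th augmentation a maximum weight matching of cardinality $k$ is at hand. As only the smaller side, of size $s$, can be saturated, at most $s$ such phases occur, and returning the best matching among all cardinalities $0,\dots,s$ yields a \MWM. The decisive point imported from~\cite{RaTa12_TR} is that each phase, a single Dijkstra-type search, profits from the imbalance $s\le t$: whenever the search reaches a large-side vertex it continues along the unique matching edge to that vertex's small-side partner, so large-side vertices are settled in constant time and never enter the priority queue. Hence a phase performs $\mathcal O(m)$ edge relaxations but only $\mathcal O(s)$ queue operations, costing $\mathcal O(m+s\log s)$ with Fibonacci heaps; multiplied by the at most $s$ phases this gives $\mathcal O(ms+s^2\log s)$, and the graph, the matching, the reduced weights and the shortest-path tree together occupy $\mathcal O(m)$ space.

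The step I expect to need the most care is matching Ramshaw and Tarjan's formulation to ours: their results are stated for minimum-cost assignments that saturate the smaller side, whereas Lemma~\ref{lemma:recursion} requires a \MWM of possibly non-saturating cardinality. I would bridge this by negating the weights, dropping edges of weight $-\infty$, and observing that the shortest-path process already exposes the optimal matching for every cardinality; stopping at the cardinality of largest weight returns a \MWM without affecting the asymptotic bound. Finally, for a complete bipartite graph I would substitute $m=st$ into the general bound, turning $ms+s^2\log s$ into $s^2t+s^2\log s$; since $\log s\le s\le t$ the second term is dominated by $s^2t$, and the time collapses to $\mathcal O(s^2t)$, as claimed.
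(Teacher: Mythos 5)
Your proposal is correct and follows essentially the same route as the paper, which offers no argument for this lemma beyond the citation: both bounds are inherited from Ramshaw and Tarjan's unbalanced successive-shortest-path analysis~\cite{RaTa12_TR} (only the $s$ small-side vertices ever enter the priority queue, giving $\mathcal O(m+s\log s)$ per phase over at most $s$ phases), and the complete-bipartite bound is just the substitution $m=st$ together with $\log s\leq s\leq t$. The only divergence is your bridge from the saturating assignment problem to a \MWM of arbitrary cardinality, which you obtain by stopping at the most profitable cardinality, whereas the paper realizes the same reduction (see its proof of Lemma~\ref{lemma:speedup_Matching}) by adding a zero-weight edge from each small-side vertex to a fresh copy and computing a \MWMk{s} on the augmented graph $G'$ --- both bridges are valid and give the stated bounds, but the copy-based formulation is the one the paper relies on later for its incremental matching computations.
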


If the weights are integral and bounded by a constant, the following result for a minimum weight matching was shown in~\cite{Goldberg2017}.
We solve \MWM by multiplying the weights by -1.
\begin{lemma}
\label{lemma:matching_int_single}
Let $G$ as in Lemma~\ref{lemma:matching_real_single} and the weights be integral and bounded by a constant $C$.
We can compute a \MWM on $G$ in time $\mathcal O(m\sqrt{s}\log C)$.
If $G$ is complete bipartite, we may simplify the time bound to $\mathcal O(s^{1.5}t\log C)$.
\end{lemma}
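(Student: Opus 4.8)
The plan is to derive both bounds directly from the minimum-weight matching result of~\cite{Goldberg2017} by a sign reversal, and then to specialise the first bound to the complete bipartite case by substituting the edge count. Keeping the notation of Lemma~\ref{lemma:matching_real_single}, $G$ is bipartite with parts of sizes $s\le t$ and $m$ edges, and all weights are integers of absolute value at most $C$. The starting point is the cost-scaling algorithm of~\cite{Goldberg2017}, which computes a \emph{minimum}-weight matching of such a graph in time $\mathcal O(m\sqrt{s}\log C)$. I would stress the two features that make this usable here: the $\sqrt{\,\cdot\,}$ factor depends only on the size $s$ of the \emph{smaller} part (the unbalanced refinement, in the spirit of~\cite{RaTa12}), and the logarithmic factor is $\log C$ rather than $\log(sC)$. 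These are the properties that keep the downstream running times competitive, so I would state them explicitly rather than treat the cited bound as a generic assignment-problem bound.

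The core of the argument is the reduction of our \emph{maximum}-weight problem to this minimum-weight problem. Replacing every weight $\wMatch(e)$ by $-\wMatch(e)$ leaves $G$, the sizes $s,t$, the edge count $m$, and the weight bound $C$ unchanged, and turns a maximum-weight matching into a minimum-weight one. The point that needs care --- and the main obstacle --- is that we seek a matching of \emph{unconstrained} cardinality and the weights may be negative, so the target is neither a minimum-weight perfect matching nor a minimum-cost matching of maximum cardinality; both of these can differ from the true optimum (e.g.\ a single heavy edge can beat two light edges). I would therefore phrase the instance as a minimum-cost flow in a unit-capacity bipartite network: a source joined to the left part, a sink joined from the right part, each internal arc $e$ carrying unit capacity and cost $-\wMatch(e)$, with the flow value left free so that only cost-reducing augmentations are performed. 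This is exactly the unit-capacity minimum-cost flow problem to which~\cite{Goldberg2017} applies, and the free flow value can be realised in the standard way (for instance by a zero-cost source-to-sink bypass arc) without changing $m$, $s$, $t$, or $C$ by more than a constant factor. Hence the bound $\mathcal O(m\sqrt{s}\log C)$ carries over to the maximum-weight matching.

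Finally, for the complete bipartite case I would substitute $m = st$ into $\mathcal O(m\sqrt{s}\log C)$, obtaining $\mathcal O(st\sqrt{s}\log C) = \mathcal O(s^{1.5}t\log C)$, which is the second claim. No further work is required here: the entire difficulty of the lemma lies in the reduction of the previous paragraph, namely in arguing that the unconstrained, possibly-negative maximum-weight matching problem is genuinely an instance of the unit-capacity minimum-cost flow problem for which the $\mathcal O(m\sqrt{s}\log C)$ bound of~\cite{Goldberg2017} holds, with none of the parameters $m,s,t,C$ inflated.
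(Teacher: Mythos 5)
Your proposal takes essentially the same route as the paper, whose entire proof is to cite the minimum-weight matching bound of~\cite{Goldberg2017}, multiply all weights by $-1$, and substitute $m=st$ in the complete bipartite case. Your extra paragraph on realising the unconstrained-cardinality, possibly-negative maximum-weight matching problem as a unit-capacity minimum-cost flow spells out a reduction detail that the paper leaves implicit, but it does not constitute a different approach.
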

Unfortunately, there is no space bound given in~\cite{Goldberg2017}.
However, since their algorithm is based on flows, the space bound is probably $\mathcal O(m)$.
If we assume this to be correct, the space bound in Theorem~\ref{theorem:TimeSpaceRootedLES} applies to integral weights too.

\begin{proof}[Proof of Theorem~\ref{theorem:TimeSpaceRootedLES}]
We observe that the entries of type $\trtr$ in the table $\TabL$ dominate the computation time.
We assume the bipartite graphs on which we compute the \MWM{}s to be complete.
We further observe that each edge $bc$ representing the weight of the \WES between the subtrees $T_b$ and $T'_c$ is contained in exactly one of the matching graphs.

Let us assume real weights first.
$\TabL$ requires $\mathcal O(|T|\,|T'|)$ space.
We may also compute each \MWM within the same space bound.
This proves the total space bound.
From Lemma~\ref{lemma:matching_real_single} the time to compute all the \MWM{}s is bounded by 
\begin{align*}
&\mathcal O\left(\sum_{v\in V_{T'}}\sum_{u\in V_T} |C(u)|\,|C(v)| \min \{|C(u)|,|C(v)|\}\right)\\
\subseteq\ & \mathcal O\left(\sum_{v\in V_{T'}} |C(v)| \sum_{u\in V_T} |C(u)|\Delta\right)
=\mathcal O(|T|\,|T'|\Delta). 
\end{align*}

Let us assume $\wIso$ to be integral and bounded by a constant $C$ next.
This implies a weight of each single matching edge of at most $\bar C:=2C\cdot\min\{|T|,|T'|\}$, since no more than $2C-1$ edges and vertices in total may contribute to the weight.
Negative weight edges never contribute to a \MWM and may safely be omitted.
From Lemma~\ref{lemma:matching_int_single} the time bound is
\begin{align*}
& \mathcal O\left(\sum_{v\in V_{T'}}\sum_{u\in V_T} |C(u)|\,|C(v)|\sqrt{\min \{|C(u)|,|C(v)|\}}\log \bar{C}\right)\\
\subseteq\ & \mathcal O\left(\sum_{v\in V_{T'}} |C(v)| \sum_{u\in V_T} |C(u)|\sqrt\Delta\log \bar{C}\right)
=\mathcal O\left(|T|\,|T'|\sqrt\Delta\log (C\min\{|T|,|T'|\})\right).
\end{align*}
\end{proof}

\section{Largest Weight Common Subtree Embeddings for Unrooted Trees}
\label{sec:unrooted}
In this section we consider a \WES between unrooted trees.
I.e., we want to find two root vertices $r\in V(T)$, $s\in V(T')$ and a common subtree embedding $\varphi$ between $T^r$ and $T'^{s}$ such that there is no embedding $\varphi'$ between $T^{r'}$ and $T'^{s'}$, $r'\in V(T)$, $s'\in V(T')$, with $\WIso(\varphi')>\WIso(\varphi)$.
We abbreviate this as \WESu.
In Sect.\,\ref{sec:unrooted_basic_fixing} we present a basic algorithm and a first improvement by fixing the root of $T$.
In Sect.\,\ref{sec:unrooted_similarities} we speed up the computation by exploiting similarities between the different chosen roots of $T'$.
In each section we prove the correctness and upper time bounds of our algorithms.

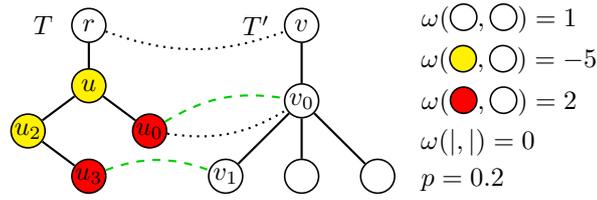
\begin{figure}
\center
\begin{tikzpicture}
\node at (7.1,4.6) {$\wIso(\tikz[baseline=-0.75ex]{\node[vertexWhite] {};}
,\tikz[baseline=-0.75ex]{\node[vertexWhite] {};})=1$};
\node at (7.23,4.05) {$\wIso(\tikz[baseline=-0.75ex]{\node[vertexWhite,fill=yellow] {};}
,\tikz[baseline=-0.75ex]{\node[vertexWhite] {};})=-5$};
\node at (7.1,3.5) {$\wIso(\tikz[baseline=-0.75ex]{\node[vertexWhite,fill=red] {};}
,\tikz[baseline=-0.75ex]{\node[vertexWhite] {};})=2$};
\node at (6.82,2.95) {$\wIso(|,|)=0$};
\node at (6.62,2.45) {$p=0.2$};

\node at (1.1,4.5) {$T$};
\node[vertexWhiteLarge] (ta) at (1.7,4.5) {};
\node at (ta) {$r$};
\node[vertexWhiteLarge, fill=yellow] (tb) at (1.7,3.7) {};
\node at (tb) {$u$};
\node[vertexWhiteLarge, fill=yellow] (tc) at (0.9,3.1) {};
\node at (tc) {$u_2$};
\node[vertexWhiteLarge, fill=red] (td) at (1.7,2.5) {};
\node at (td) {$u_3$};
\node[vertexWhiteLarge, fill=red] (te) at (2.5,3.1) {};
\node at (te) {$u_0$};

\draw[edge] (ta) -- (tb);
\draw[edge] (tb) -- (tc);
\draw[edge] (tc) -- (td);
\draw[edge] (tb) -- (te);

\node at (3.9,4.5) {$T'$};

\node[vertexWhiteLarge] (s1) at (4.5,4.5) {};
\node at (s1) {$v$};
\node[vertexWhiteLarge] (s2) at (4.5,3.5) {};
\node at (s2) {$v_0$};
\node[vertexWhiteLarge] (s3) at (3.5,2.5) {};
\node at (s3) {$v_1$};
\node[vertexWhiteLarge] (s4) at (4.5,2.5) {};
\node at (s4) {};
\node[vertexWhiteLarge] (s5) at (5.5,2.5) {};
\node at (s5) {};

\draw[edge] (s1) -- (s2);
\draw[edge] (s2) -- (s3);
\draw[edge] (s2) -- (s4);
\draw[edge] (s2) -- (s5);

\draw[edgeIso] (ta) edge (s1);
\draw[edgeIso] (te) edge (s2);
\draw[edgeIsoGreen] (te) edge (s2);
\draw[edgeIsoGreen] (td) edge (s3);
\end{tikzpicture}
\caption{The weight of a \WES between $T^{r}$ and $T'^{v}$ is 2.8 (black, dotted), since we have one skipped vertex $u$ for penalty $0.2$. The weight of a \WES between $T^{u_0}$ and $T'^{v_0}$ is 3.6 (green, dashed) for two skipped vertices. The latter one is also a \WESu.}
\label{fig:lesu}
\end{figure}

\subsection{Basic algorithm and fixing one root}
\label{sec:unrooted_basic_fixing}
The basic idea is to compute for each pair of vertices $(u,v)\in V(T)\times V(T')$ a (rooted) \WES from $T^u$ to $T'^v$ and output a maximum solution.
This is obviously correct and the time bound is $\mathcal O(|T|^2\,|T'|^2\Delta)$.

In our previous work~\cite{DrKrMu2016} we showed how to compute a maximum common subtree between unrooted trees by arbitrarily choosing one root vertex $r$ of $T$ and then computing MCSs between $T^r$ and $T'^s$ for all $s\in V(T')$.
We may adapt this strategy to find a \WESu between the input trees.
However, Fig.\,\ref{fig:lesu} shows that this strategy sometimes fails. 
A \WES between $T^r$ and $T'$ rooted at any vertex results in a weight of at most 2.8.
However, rooting $T$ at $u_0$ results in a \WES of weight 3.6.

In a maximum common subtree between trees $T$ and $T'$, let $r \in V(T)$ be an arbitrarily chosen root of $T$.
If any two children $u_1,u_2$ of their parent node $u\in V(T)$ are mapped to vertices of $T'$, then $u$ is also mapped.
This statement is independent from the chosen root node, since a common subtree is connected.
If we want to compute a (rooted) \WES, the statement is also true (for the given root).
This follows from Def.\,\ref{def:topemb} ii).
However, if we choose $u_1$ as root in a \WESu, we may skip $u$ and map $u_2$, forming the topological path $(u_1,u,u_2)$.
Whatever we do, if we skip vertex $u$ as an inner vertex of a topological path, this is the only path containing $u$; otherwise we violate Def.\,\ref{def:topemb}. We record this as a lemma.
\begin{lemma}
Let $T$ and $T'$ be unrooted trees.
Let $\varphi$ be a \WESu from $T$ to $T'$ and $u\in V(T)$ be an inner vertex of a topological path with its neighbors $N(u)=\{u_1,u_2,\ldots,u_k\}$.
Then $\varphi$ maps vertices from exactly two of the rooted subtrees $T^u_{u_1},\ldots,T^u_{u_k}$ to $T'$.
\end{lemma}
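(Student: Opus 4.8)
The plan is to prove this as a purely structural statement about topological embeddings; neither the weight function nor the optimality of $\varphi$ will be needed. Write $S$ for the common embeddable subtree and $\psi\colon V(S)\to V(T)$ for the embedding into $T$ associated with the root chosen for the \WESu, so that the vertices of $T$ mapped by $\varphi$ are exactly those in the image of $\psi$. By hypothesis $u$ is an interior vertex of a topological path $P$, which by Def.\,\ref{def:les_weight_dist} corresponds to some edge $xy$ of $S$; hence the two endpoints $a:=\psi(x)$ and $b:=\psi(y)$ of $P$ are mapped, while $u$ is skipped and therefore \emph{not} in the image of $\psi$. Since $P$ is the unique $a$--$b$ path in the tree $T$ and $u$ is interior to it, deleting $u$ separates $a$ from $b$, so $a$ and $b$ lie in two distinct branches, say $a\in V(T^u_{u_i})$ and $b\in V(T^u_{u_j})$ with $i\neq j$. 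This already shows that $\varphi$ maps vertices from \emph{at least} two of the subtrees $T^u_{u_1},\dots,T^u_{u_k}$.

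For the upper bound I would argue by contradiction. Suppose a third branch $T^u_{u_m}$ with $m\notin\{i,j\}$ also contained a mapped vertex $c:=\psi(z)$, $z\in V(S)$. Because $a$ and $c$ lie in different branches of $u$, the $a$--$c$ path in $T$ leaves branch $i$, passes through $u$, and then enters branch $m$; in particular it never enters branch $j$, so the vertex $b\in V(T^u_{u_j})$ (which satisfies $b\neq u$) does \emph{not} lie on the $a$--$c$ path. The whole point is to contradict exactly this.

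The key ingredient is a \emph{betweenness} property of topological embeddings: if a vertex $w$ lies on the $S$-path between $p$ and $q$, then $\psi(w)$ lies on the $T$-path between $\psi(p)$ and $\psi(q)$. Granting this, observe that since $xy$ is an \emph{edge} of $S$, the median of $x,y,z$ in $S$ lies on the path $x$--$y$ and hence equals $x$ or $y$; by the symmetry $x\leftrightarrow y$ (which interchanges $a\leftrightarrow b$ and $i\leftrightarrow j$) I may assume it is $y$. Then $y$ lies on the $S$-path from $x$ to $z$, so by betweenness $b=\psi(y)$ lies on the $T$-path from $a=\psi(x)$ to $c=\psi(z)$, contradicting the previous paragraph. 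Hence no third branch carries a mapped vertex, and combined with the first paragraph $\varphi$ maps vertices from exactly two of the $T^u_{u_i}$.

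I expect the betweenness property to be the main obstacle, as it is the only place where Def.\,\ref{def:topemb} genuinely enters. I would prove it by letting $\mu'$ be the apex of the $S$-path (its vertex closest to the root of $S$): the path then splits into two segments, each an ancestor chain, so by condition i) their $\psi$-images are ancestor chains in $T$ passing through every $\psi(w_t)$, and at $\mu'$ the two segments leave via distinct children, so condition ii) forces the two image chains to meet only in $\psi(\mu')$. The only routine point is that the descendant sets below two incomparable children of $\psi(\mu')$ are disjoint, which guarantees the concatenation is the simple $\psi(p)$--$\psi(q)$ path. Note that this apex decomposition conveniently confines every branching to a single vertex whose two relevant path-neighbors are both children, so condition ii) suffices and no separate ``parent-direction'' case is required.
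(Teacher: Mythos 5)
Your proof is correct, and it takes a genuinely more explicit route than the paper, which in fact gives this lemma no formal proof at all: the lemma is ``recorded'' after an informal discussion whose whole justification is that a skipped vertex $u$ can be an inner vertex of only one topological path, ``otherwise we violate Def.~\ref{def:topemb}'', with connectivity of the common embeddable subtree then implicitly forcing exactly two branches. Your argument replaces that appeal with two concrete ingredients: a betweenness property of topological embeddings (if $w$ lies on the $S$-path between $p$ and $q$, then $\psi(w)$ lies on the $T$-path between $\psi(p)$ and $\psi(q)$), proved via the apex decomposition where condition i) of Def.~\ref{def:topemb} handles the two monotone segments and condition ii) handles the single branching vertex; and a median argument that kills a hypothetical third mapped branch -- since $xy$ is an edge of $S$, the median of $x,y,z$ in $S$ is $x$ or $y$, so betweenness forces $\psi(x)$ or $\psi(y)$ onto a $T$-path which, by the branch structure at $u$, provably avoids it. Both routes ultimately rest on the same two conditions of Def.~\ref{def:topemb}, but yours localizes their use inside one clean lemma and never needs the uniqueness-of-the-path-through-$u$ claim that the paper asserts without proof (and which is itself about as hard to verify as your betweenness property); what the paper's phrasing buys instead is that the uniqueness statement is exactly the invariant the subsequent algorithm for \WESu exploits, so stating it directly suits the algorithmic exposition. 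One cosmetic remark: your observation that $u$ is not in the image of $\psi$ is true but needs an argument of the same flavor as your betweenness lemma; fortunately it is never used, since the lemma counts only the branches $T^u_{u_1},\ldots,T^u_{u_k}$, none of which contains $u$, so you could simply delete that clause.
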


To compute a \WESu $\varphi$, additionally to the strategy from~\cite{DrKrMu2016}, we need to cover the case, that there is no single vertex $u$ mapped by $\varphi$, such that all vertices mapped by $\varphi$ are contained in $T^r_u$, cf.\,Fig.\,\ref{fig:lesu} with $r$ as chosen root.
In this case let $u$ be the unique inner vertex of a topological path $P$, such that all vertices mapped by $\varphi$ are contained in $T^r_u$.
An example is the yellow vertex $u$ in Fig.\,\ref{fig:lesu}.
Further, let $P_1=(u_0,\ldots,u_{i-1},u_i=u,u_{i+1},\ldots,u_k)$ be the topological path containing $u$ and $\varphi(P_1)=P_2=(v_0,\ldots,v_l)$ with $\varphi(u_0)=v_0$ and $\varphi(u_k)=v_l$.

Then there is a \WES $\phi_1$ between the rooted subtrees $T^r_{u_{i-1}}$ and $T'^{v_1}_{v_0}$
containing $u_0$ and all its descendants mapped by $\varphi$.
There is another \WES $\phi_2$ between the rooted subtrees $T^r_{u_{i+1}}$ and $T'^{v_0}_{v_1}$ containing $u_k$ and all its descendants mapped by $\varphi$.

For any vertex $v\in V(T')$ let $\TabL^v$ refer to the table $\TabL$ corresponding to $T^r$ and $T'^v$.
Then $L_1:=\max\{\TabL^{v_1}(u_{i-1},v_0,t)\mid \tin\}$ is the weight of the \WES $\phi_1$ minus the penalty for the inner vertices $u_1,\ldots,u_{i-1}$; the penalty is 0, if $i=1$.
$L_2:=\max\{\TabL^{v_0}(u_{i+1},v_1,t)\mid \tin\}$ is the weight of the \WES $\phi_2$ minus the penalty for the inner vertices $u_{i+1},\ldots,u_{k-1}$ and $v_1,\ldots,v_{l-1}$.
I.e., the penalty $p$ for each inner vertex on the paths excluding $u$ is included in $L_1+L_2$.
Therefore $\WIso(\varphi)=L_1+L_2-p$.
Before summarizing this strategy in the following Lemma~\ref{lemma:wes_strat}, we exemplify it on Fig.\,\ref{fig:lesu}.

The \WESu $\varphi$ is depicted by green dashed lines with mapping $u_0\mapsto v_0$ and $u_3\mapsto v_1$.
The yellow inner vertex $u$ fulfills the condition that $T^r_u$ contains all vertices mapped by $\varphi$.
We have paths $P_1=(u_0,u,u_2,u_3)$ and $\varphi(P_1)=P_2=(v_0,v_1)$.
Then $L_1=\TabL^{v_1}(u_0,v_0,\trtr)=2$ for the mapping $u_0\mapsto v_0$.
Further $L_2=\TabL^{v_0}(u_2,v_1,\tsk)=1.8$ for the mapping $u_3\mapsto v_1$ and the skipped vertex $u_2$.
Note, there is no inner vertex in $P_1$.
We obtain $\WIso(\varphi)=L_1+L_2-p=2+1.8-0.2=3.6$.

\begin{lemma}
\label{lemma:wes_strat}
Let $T$ and $T'$ be trees.
Let $r\in V(T)$ be arbitrarily chosen.
Let $\mathcal W(r,v)$ be the weight of a \WES from $T^r$ to $T'^v$ and
$\mathcal T(u,v,w)=\max\{\TabL^{w}(u,v,t)\mid \tin\}$.
Then the weight of a \WESu is the maximum of the following two quantities.
\begin{itemize}
\item $M_1=\max\{\mathcal W(r,v)\mid v\in V(T')\}$
\item $M_2=\max_{u\in V(T),\ vw\in E(T')}\{\mathcal T(u_1,v,w)+\mathcal T(u_2,w,v)\mid u_1\neq u_2\in C(u)\}-p$
\end{itemize}
\end{lemma}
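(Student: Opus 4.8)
The plan is to show that the weight of a \WESu equals $\max\{M_1, M_2\}$ by proving two inequalities: that every \WESu has weight at most $\max\{M_1,M_2\}$, and that both $M_1$ and $M_2$ are themselves weights of valid common subtree embeddings (hence at most the optimal \WESu weight). The second direction is the routine one, so I would dispatch it first; the first direction, a careful case analysis establishing that the optimum is always \emph{captured} by one of the two quantities, is where the real work lies.

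First I would argue $\max\{M_1,M_2\}\le\WIso(\text{\WESu})$. For $M_1$: each $\mathcal W(r,v)$ is by definition the weight of a genuine rooted \WES between $T^r$ and $T'^v$, which is in particular a common subtree embedding between the unrooted trees, so $\mathcal W(r,v)$ is a lower bound on the \WESu weight. For $M_2$: given a maximizing choice of $u$, distinct children $u_1\neq u_2\in C(u)$, and an edge $vw\in E(T')$, the two tabulated quantities $\mathcal T(u_1,v,w)$ and $\mathcal T(u_2,w,v)$ describe embeddings of the subtrees hanging off $u_1$ and $u_2$ into the two halves of $T'$ split by the edge $vw$; joining them through the topological path passing through the skipped vertex $u$ yields a valid embedding of an unrooted common subtree whose total weight is exactly $\mathcal T(u_1,v,w)+\mathcal T(u_2,w,v)-p$, using the already-established accounting from Def.\,\ref{def:les_weight_dist} that each $\mathcal T$ value already absorbs the penalties for all skipped inner vertices except $u$ itself. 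I would lean on Lemma~\ref{lemma:recursion} and the surrounding discussion of how $L_1+L_2-p$ correctly totals the penalty to make this precise.

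The harder direction is $\WIso(\text{\WESu})\le\max\{M_1,M_2\}$. Here I would take an optimal \WESu $\varphi$ and split into two exhaustive cases according to whether there is a single vertex $u^\ast\in V(T)$ mapped by $\varphi$ such that all $\varphi$-mapped vertices lie in $T^r_{u^\ast}$. In the first case, such a $u^\ast$ serves as a root, and $\varphi$ restricted appropriately is a rooted root-to-root \WES captured by some $\mathcal W(r,v)$, so $\WIso(\varphi)\le M_1$. The delicate case is when no such mapped vertex exists: as argued in the text preceding the lemma, there is then a \emph{unique inner vertex} $u$ of a topological path with all mapped vertices in $T^r_u$, and by the preceding unnamed lemma this $u$ sits on exactly one topological path $P_1=(u_0,\dots,u_{i-1},u,u_{i+1},\dots,u_k)$ with its mapped descendants confined to exactly two of its rooted subtrees. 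I would identify the neighbors $u_{i-1}\in C(u)$ and $u_{i+1}\in C(u)$ (taking $u_1:=u_{i-1}$, $u_2:=u_{i+1}$ as the two distinct children of $u$), and the corresponding edge $v_0v_1\in E(T')$ where $\varphi(u_0)=v_0$, $\varphi(u_k)=v_1$. The two sub-embeddings $\phi_1,\phi_2$ then realize the table entries $\mathcal T(u_1,v_0,v_1)$ and $\mathcal T(u_2,v_1,v_0)$, and the weight decomposition $\WIso(\varphi)=L_1+L_2-p$ shows $\WIso(\varphi)\le M_2$.

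The main obstacle will be the bookkeeping in this second subcase: establishing that $u$ is genuinely unique, that the split of $\varphi$'s mapped region across the edge $v_0v_1$ of $T'$ is clean (so the two sub-embeddings are independent and their weights add), and above all that the penalty arithmetic is exact — that the penalty for every skipped inner vertex \emph{other than $u$} is already counted inside $L_1$ and $L_2$, with $u$ itself contributing the single extra $-p$. I would handle this by carefully tracing which inner vertices of $P_1$ and of $\varphi(P_1)=P_2$ fall into the $\tsk$-type table entries $\TabL^{v_1}(u_{i-1},v_0,t)$ and $\TabL^{v_0}(u_{i+1},v_1,t)$, relying on the definition of type $\tsk$ which pre-subtracts $p$ for skipped roots. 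The worked example on Fig.\,\ref{fig:lesu} provides the sanity check that the totals match, and I would cite it to reassure the reader that the accounting is correct before concluding equality.
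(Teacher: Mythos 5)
Your proposal is correct and follows essentially the same route as the paper, whose proof of this lemma is exactly the discussion preceding it: the case split on whether some mapped vertex $u^\ast$ has all mapped vertices inside $T^r_{u^\ast}$ (giving $M_1$), otherwise the unique skipped inner vertex $u$ with the accounting $\WIso(\varphi)=L_1+L_2-p$ (giving $M_2$), together with the routine converse that both quantities are realized by valid embeddings, which the paper leaves implicit. The only slip is cosmetic: in general $\varphi(u_k)=v_l$ rather than $v_1$, since the image path $P_2$ may have length $l>1$ with $v_1,\ldots,v_{l-1}$ skipped, but this is precisely what the $\tsk$-type entry $\TabL^{v_0}(u_{i+1},v_1,t)$ pre-penalizes, as your own bookkeeping paragraph acknowledges.
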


\begin{lemma}
\label{lemma:wes_strat_complexity}
Let the preconditions be as in Lemma~\ref{lemma:wes_strat}.
Then $M_1$ can be computed in time $\mathcal O(|T|\,|T'|^2\Delta)$ and $M_2$ in time $\mathcal O(|T|\,|T'|)$; both can be computed in space $\mathcal O(|T|\,|T'|)$.
\end{lemma}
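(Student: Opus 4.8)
The plan is to compute $M_1$ by running the rooted algorithm of Theorem~\ref{theorem:TimeSpaceRootedLES} once for every choice of root in $T'$, and then to obtain $M_2$ essentially for free by salvaging intermediate values produced during these runs. For $M_1$ I would proceed directly from its definition: for a fixed $v\in V(T')$, Theorem~\ref{theorem:TimeSpaceRootedLES} computes the complete table $\TabL^v$ for $T^r$ and $T'^v$ in time $\mathcal O(|T|\,|T'|\Delta)$ and space $\mathcal O(|T|\,|T'|)$, and $\mathcal W(r,v)$ is read off as the maximum entry of type $\trtr$. Looping over all $|T'|$ vertices $v$ and taking the maximum yields $M_1$; reusing the same $\mathcal O(|T|\,|T'|)$ table space in each iteration preserves the space bound, while the time accumulates to $\mathcal O(|T|\,|T'|^2\Delta)$.

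The crucial observation for $M_2$ is that the quantities $\mathcal T(u,v,w)=\max_{\tin}\TabL^w(u,v,t)$ required there form a subset of the values already computed while evaluating $M_1$. Indeed, $\TabL^w(u,v,t)$ depends only on the rooted subtrees $T^r_u$ and $T'^w_v$, and for $vw\in E(T')$ the vertex $v$ is a child of the root $w$ in $T'^w$, so $\mathcal T(u,v,w)$ is merely the maximum of the two entries $\TabL^w(u,v,\trtr)$ and $\TabL^w(u,v,\tsk)$ whose $T'$-coordinate is a child of the root $w$. There are only $\mathcal O(|T|\,|T'|)$ such values in total, one for each vertex $u\in V(T)$ and each directed edge $(w,v)$ of $T'$. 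Hence, during the $M_1$ loop, immediately after computing each table $\TabL^w$ and before its space is reused, I would copy out these maxima into a persistent array of size $\mathcal O(|T|\,|T'|)$; this adds only $\mathcal O(|T|\,|T'|)$ time, subsumed by the cost of $M_1$, and $\mathcal O(|T|\,|T'|)$ space.

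Given this array, $M_2$ is evaluated purely combinatorially. For a fixed edge $vw\in E(T')$ and a fixed $u\in V(T)$, maximizing $\mathcal T(u_1,v,w)+\mathcal T(u_2,w,v)$ over distinct children $u_1\neq u_2\in C(u)$ is the standard problem of finding the best sum of two entries taken at different positions; it is solved in $\mathcal O(|C(u)|)$ time by precomputing, over the children of $u$, the largest and second-largest values of $\mathcal T(\cdot,v,w)$ and of $\mathcal T(\cdot,w,v)$. Summing over $u\in V(T)$ gives $\mathcal O(|T|)$ per edge, and over all $\mathcal O(|T'|)$ edges of $T'$ gives $\mathcal O(|T|\,|T'|)$ time; subtracting $p$ and taking the overall maximum completes $M_2$. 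Every data structure used is bounded by $\mathcal O(|T|\,|T'|)$, which establishes the claimed space bound.

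I expect the main obstacle to be the accounting that makes $M_2$ cheap rather than any single difficult step: one must recognize that the $\mathcal T$-values are exactly the $\mathcal O(|T|\,|T'|)$ entries already produced by the $M_1$ computation, so that they incur no additional asymptotic time and only the extra persistent storage, and one must carry out the inner maximization over distinct pairs of children in time linear in $|C(u)|$ rather than quadratically. Only then do the totals telescope to $\mathcal O(|T|\,|T'|)$ via $\sum_{u\in V(T)}|C(u)|=\mathcal O(|T|)$ and $|E(T')|=\mathcal O(|T'|)$.
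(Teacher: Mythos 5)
Your proposal is correct and follows essentially the same route as the paper: $M_1$ is computed by running the rooted algorithm of Theorem~\ref{theorem:TimeSpaceRootedLES} once per root $v\in V(T')$ within reused $\mathcal O(|T|\,|T'|)$ space, and $M_2$ is evaluated from the $\mathcal O(|T|\,|T'|)$ values $\mathcal T(u,v,w)$, indexed by a vertex of $T$ and a directed edge of $T'$, using a linear-time maximization over distinct children of each $u$. The only cosmetic difference is that the paper phrases this inner maximization as a \MWMk{2} on a small bipartite graph $C(u)\sqcup\{v,w\}$ pruned to the two heaviest edges incident to $v$ and to $w$, which is equivalent to your largest/second-largest selection.
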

\begin{proof}
For any vertices $s,v\in V(T')$ consider the rooted subtree $T'^s_v$. Let $p(v)$ be the parent vertex of $v$ with $p(s)=s$.
Then $T'^s_v=T'^{p(v)}_v=T'^w_v$ for each vertex $w\in V(T')\setminus V(T'^s_v)$, i.e., $w$ is a vertex of $T'$, which is not contained in the rooted subtree $T'^s_v$.
Therefore we can identify each table entry $\TabL^s(u,v,t)$ by $\TabL^{p(v)}(u,v,t)$.
In other words, all the table entries needed to compute $M_1$ are determined first by a node $u\in V(T)$, and second by either an edge $wv\in E(T')$ or the root vertex $s$.
Therefore, the space needed to store all the table entries and thus compute $M_1$ is $\mathcal O(|T|\,|T'|)$.

From Theorem~\ref{theorem:TimeSpaceRootedLES} for each $v\in V(T')$ we can compute $\mathcal W(r,v)$ in time $\mathcal O(|T|\,|T'|\Delta)$.
Thus, the time for $M_1$ is bounded by $\mathcal O(|T|\,|T'|^2\Delta)$.
For any edge $vw\in E(T')$ we observe that the only rooted subtrees from $T'$ to consider are $T'^v_w$ and $T'^w_v$.
Let $L(b,v)$ and $L(b,w)$, $b\in C(u)$ be the weight of a \WES from $T^r_b$ to $T'^w_v$ and $T'^v_w$, respectively.
Let $G$ be a bipartite graph with vertices $C(u)\sqcup \{v,w\}$ and edges between these vertices with weights defined by $L(b,v)$ and $L(b,w)$, respectively.
Let $M$ be a \MWMk{2} on $G$. 
Then $\WMatch(M)=\max\{\mathcal T(u_1,v,w)+\mathcal T(u_2,w,v)\mid u_1\neq u_2\in C(u)\}$.
This follows from the construction of $G$.
Note, a \MWMk{2} contains exactly 2 edges.
Let $C(u)=\{b_1,\ldots,b_k\}$ such that $L(b_i,v)\geq L(b_{i+i},v)$ for any $i<k$.
I.e., the vertices $b_i$ are ordered, such that $L(b_1,v)$ and $L(b_2,v)$ have weight at least $L(b_i,v)$ for all $i>2$.
We remove all edges incident to $v$ except $b_1v$ and $b_2v$.
Analog we remove all but the two edges of greatest weight incident to $w$.
Let $G'$ be the graph with those edges removed and $M'$ be a \MWMk{2} on $G'$.
We next prove $\WMatch(M')=\WMatch(M)$.
Let $M$ be a matching on $G$. Assume the partner of $v$ is $b_i$, $i>2$.
Then let $b=b_1$ if $b_1$ is not the partner of $w$, and $b=b_2$ otherwise.
Replacing $vb_i\in M$ by $vb$  results in a matching $M'$ such that $\WMatch(M')\geq \WMatch(M)$.
We may argue analog for $w$.

Since $G'$ contains at most 4 edges we may compute $M'$ in constant time.
The time to remove the edges from $G$ to $G'$ is $\mathcal O(k)$.
Therefore the time to compute $\max\{\mathcal T(u_1,v,w)+\mathcal T(u_2,w,v)\mid u_1\neq u_2\in C(u)\}$ for given $u$ and $vw$ is $\mathcal O(|C(u)|)$.
The time to compute $M_2$ is $\mathcal O(\sum_{u\in V_T,vw\in E_{T'}} |C(u)|)=\mathcal O(|T|\,|T'|)$.

We may compute $M_2$ from $\TabL$ and additional space $\mathcal O(|T|)$, which is $\mathcal O(|T|\,|T'|)$ in total.
\end{proof}

\subsection{Exploiting similarities}
\label{sec:unrooted_similarities}
In this section we improve the running time from $\mathcal O(|T|\,|T'|^2\Delta)$ to $\mathcal O(|T|\,|T'|\Delta)$. 
To this end, we need to speed up the computation in Lemma~\ref{lemma:recursion}.
Specifically, we exploit similarities between the graphs on which we compute the maximum weight matchings. We further need to speed up the computation of $M^{T'}_{t}$  related to the root vertices from $T'$.
We have to take special care of the sequence, in which we compute the table entries, to avoid circular dependencies. 

\subparagraph{Speeding up the dynamic programming approach.}
In Lemma~\ref{lemma:recursion} the recursion computes maximum values among certain table entries.
We first include the current root $s\in V(T')$ into the notation.
We use the definition of $\TabL^s$ from Sect.\,\ref{sec:unrooted_basic_fixing} refering to the table where $s$ is the root of $V(T')$.
Let $u\in V(T)$ and $v\in V(T')$ be the vertices in the current recursion of Lemma~\ref{lemma:recursion}.
For all $\tin$ let $M^{T,s}_{t}=\max\{\TabL^s(b,v,t)\mid b\in C(u)\}$ and $M^{T',s}_{t}=\max\{\TabL^s(u,c,t)\mid c\in C(v)\}$.
From the proof of Lemma~\ref{lemma:wes_strat_complexity} we know that $T'^s_v=T'^w_v$ for each vertex $w \in V(T')\setminus V(T'^s_v)$.
This implies $M^{T,s}_{t}=M^{T,w}_{t}$ and $M^{T',s}_{t}=M^{T',w}_{t}$ for all $w$ as before.
I.e., it is sufficient to distinguish all the $M^{T,s}_{t}$ and $M^{T',s}_{t}$ first by a node $u\in V(T)$, and second by either an edge $wv\in E(T')$ or a single vertex $s\in V(T')$.

This observation allows us to upper bound the time to compute all the $M^{T,s}_{t}$ by 
\begin{equation*}
\mathcal O\left(\sum_{u\in V_T,wv\in E_{T'}}|C(u)|+\sum_{u\in V_T,s\in V_{T'}}|C(u)|\right)=
\mathcal O\left(\sum_{wv\in E_{T'}}|T|+\sum_{s\in V_{T'}}|T|\right)=
\mathcal O\left(|T|\,|T'|\right). 
\end{equation*}

Let $N(v)=\{c_1,c_2,\ldots c_l\}$ and $C_i:=\{c_1,\ldots,c_{i-1},c_{i+1},\ldots,c_l\}$ for $1\leq i \leq l$, i.e, $C_i$ contains all the vertices from $N(v)$ except $c_i$.
We observe $M^{T',v}_{t}=\max\{\TabL^v(u,c,t)\mid c\in N(v)\}$ and $M^{T',c_i}_{t}=\max\{\TabL^{c_i}(u,c,t)\mid c\in C_i\}
=\max\{\TabL^{v}(u,c,t)\mid c\in C_i\}$ for each $i\in \{1,\ldots,l\}$.
Let $j$ be an index, such that $M^{T',v}_{t}=\TabL^v(u,c_j,t)$.
Then for each $i\neq j$ we have $M^{T',c_i}_{t}=M^{T',v}_{t}$.
Therefore, we may compute $M^{T',v}_{t}$ and $M^{T',{c_i}}_{t}$ for all $i\in \{1,\ldots,l\}$ in time $\mathcal O(\delta(v))$.
Hence, the time to compute all the $M^{T',s}_{t}$ is bounded by
$\mathcal O\left(\sum_{u\in V_T,v\in V_{T'}}\delta(v)\right)= \mathcal O\left(\sum_{u\in V_T}|T'|\right) = \mathcal O(|T|\,|T'|)$.
\begin{lemma}
\label{lemma:speedupM}
Assume there is a sequence of all pairs $(u,v)$ such that all necessary values are available to compute $M^{T,s}_{t}$ and $M^{T',s}_{t}$ for $s\in N(v)\cup \{v\}$; then the total time is  $\mathcal O(|T|\,|T'|)$.
\end{lemma}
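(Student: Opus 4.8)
The plan is to establish Lemma~\ref{lemma:speedupM} by assembling the two separate bounds already derived in the running text immediately preceding the statement, and then arguing that the hypothesized traversal order makes these bounds simultaneously attainable. The lemma really has two parts: the time to compute all the $M^{T,s}_{t}$ values, and the time to compute all the $M^{T',s}_{t}$ values. For the former, I would cite the already-established observation that each $M^{T,s}_{t}$ is determined first by a node $u\in V(T)$ and second by either an edge $wv\in E(T')$ or a single root vertex $s\in V(T')$, together with the displayed equation showing this sums to $\mathcal O(|T|\,|T'|)$. For the latter, I would invoke the argument that once $M^{T',v}_{t}=\TabL^v(u,c_j,t)$ is identified, the values $M^{T',c_i}_{t}$ for all $i\neq j$ coincide with $M^{T',v}_{t}$, so all $\delta(v)+1$ relevant maxima for a fixed pair $(u,v)$ can be extracted in time $\mathcal O(\delta(v))$, again summing to $\mathcal O(|T|\,|T'|)$.

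The crux that the lemma's hypothesis supplies, and which the preceding displayed computations silently assume, is \emph{availability}: to compute $M^{T,s}_{t}$ and $M^{T',s}_{t}$ for a given pair $(u,v)$, the table entries $\TabL^s(b,v,t)$ for $b\in C(u)$ and $\TabL^s(u,c,t)$ for $c\in N(v)$ must already have been filled in. First I would spell out that the standing assumption of the lemma is precisely that some sequence over the pairs $(u,v)$ exists in which, when $(u,v)$ is processed, every entry feeding the four maxima (over $t\in\{\trtr,\tsk\}$ and over the two directions $T,T'$) for every relevant root $s\in N(v)\cup\{v\}$ has been computed. Granting this, the work per pair is bounded by the counting arguments above, and no value is recomputed, so the two $\mathcal O(|T|\,|T'|)$ bounds add to the claimed total.

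The main obstacle I anticipate is not the arithmetic but the \emph{legitimacy of the amortized counting} hidden in the step $M^{T',c_i}_{t}=M^{T',v}_{t}$ for $i\neq j$. One must verify that identifying the single maximizing index $j$ and then reusing $M^{T',v}_{t}$ as the answer for every other root $c_i$ is correct, i.e.\ that removing one child $c_i$ from the max over $N(v)$ changes the value only when $c_i=c_j$ is the unique maximizer; handling ties (several children attaining the maximum) requires a short remark that in that case dropping any single $c_i$ leaves another maximizer, so the value is unchanged and the $\mathcal O(\delta(v))$ bound still holds. I would therefore phrase the argument so that $M^{T',c_i}_{t}=M^{T',v}_{t}$ holds for all $i$ except possibly a unique maximizing index, which is all that is needed.

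Concretely, the proof reads as follows.

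\begin{proof}
By hypothesis there is a sequence of all pairs $(u,v)$ in which, at the time $(u,v)$ is processed, every entry $\TabL^{s}(b,v,t)$ with $b\in C(u)$ and every entry $\TabL^{s}(u,c,t)$ with $c\in N(v)$ is already available for each $s\in N(v)\cup\{v\}$ and each $\tin$; hence no value is recomputed and it suffices to bound the total work.

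For the values $M^{T,s}_{t}$ we use the preceding observation that each such value is identified first by a node $u\in V(T)$ and second by either an edge $wv\in E(T')$ or a single root $s\in V(T')$, and that for fixed $u,v$ it is a maximum over $C(u)$. Summing the cost of these maxima over both index regimes gives
\begin{equation*}
\mathcal O\!\left(\sum_{u\in V_T,\,wv\in E_{T'}}|C(u)|+\sum_{u\in V_T,\,s\in V_{T'}}|C(u)|\right)
=\mathcal O\!\left(\sum_{wv\in E_{T'}}|T|+\sum_{s\in V_{T'}}|T|\right)
=\mathcal O(|T|\,|T'|).
\end{equation*}

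For the values $M^{T',s}_{t}$ fix $u\in V(T)$ and $v\in V(T')$ with $N(v)=\{c_1,\dots,c_l\}$. Computing $M^{T',v}_{t}=\max\{\TabL^{v}(u,c,t)\mid c\in N(v)\}$ also yields a maximizing index $j$. Since $M^{T',c_i}_{t}=\max\{\TabL^{v}(u,c,t)\mid c\in C_i\}$ drops only the child $c_i$, we have $M^{T',c_i}_{t}=M^{T',v}_{t}$ for every $i$ for which $c_i$ is not the unique maximizer; in particular at most one index requires a separate scan. Thus all $l+1$ values $M^{T',v}_{t}$ and $M^{T',c_i}_{t}$ for a fixed pair $(u,v)$ are obtained in time $\mathcal O(\delta(v))$, and
\begin{equation*}
\mathcal O\!\left(\sum_{u\in V_T,\,v\in V_{T'}}\delta(v)\right)
=\mathcal O\!\left(\sum_{u\in V_T}|T'|\right)
=\mathcal O(|T|\,|T'|).
\end{equation*}

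Adding the two bounds yields total time $\mathcal O(|T|\,|T'|)$.
\end{proof}
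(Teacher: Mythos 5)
Your proof is correct and takes essentially the same route as the paper: the lemma is proved there by the running text immediately preceding it, namely the same two counting arguments you give (indexing the distinct values $M^{T,s}_{t}$ by a vertex $u\in V(T)$ together with an edge or root of $T'$, and reusing the maximizer over $N(v)$ so that only the index $j$ needs a separate scan when computing the $M^{T',s}_{t}$). Your added remark about ties is a harmless refinement of the paper's step of fixing a maximizing index $j$ and concluding $M^{T',c_i}_{t}=M^{T',v}_{t}$ for $i\neq j$.
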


\subparagraph{Exploiting similarities between the matching graphs.}
In Lemma~\ref{lemma:recursion} we need to compute a \MWM for each $(u,v)\in V(T)\times V(T')$.
When considering all roots $s\in V(T')$, we have one matching graph $G$ with vertices $C(u)\sqcup N(v)$, $N(v)=\{c_1,\ldots,c_l\}$ as well as $l$ graphs $G_{c_i}$, $1\leq i\leq l$.
This follows analog to the observation regarding $M^{T',v}_{t}$ from the previous paragraph.
A graph $G_{c}$, $c\in N(v)$, is the same as $G$ except that the vertex $c$ and incident edges are removed.
Let $s:=\min\{\delta(u),\delta(v)\}$ and $t:=\max\{\delta(u),\delta(v)\}$.
We now prove a total time bound of $\mathcal O(s^2t)$ for computing a \MWM on $G$ as well as on $G_{c}$ for all $c\in N(v)$.
We distinguish two cases.

i) $s\geq\log t$. In our previous work we presented an algorithm to compute a weighted maximum common subtree, a special case of \WESu ~\cite{DrKrMu2016}. 
A subproblem is to compute \MWM{}s on graphs structurally identical to $G$ and $G_{c_i}$, $1\leq i\leq l$.
We showed that we can compute all those \MWM{}s in time $\mathcal O(st(s+\log t))$.
Under the premise $s\geq\log t$ the time bound is $\mathcal O(s^2t)$.

ii) $s<\log t$. Then one vertex set is much smaller than the other. From Lemma~\ref{lemma:speedup_Matching} we can compute all those \MWM{}s in time $\mathcal O(s^4+s^2t)$.
Under the premise $s<\log t$ that is $\mathcal O(s^2t)$.

\begin{lemma}
\label{lemma:speedupM_MatchingAll}
Assume there is a sequence of all pairs $(u,v)$ such that all necessary values are available to compute the \MWM{s}; then the total time is  $\mathcal O(|T|\,|T'|\Delta)$.
\end{lemma}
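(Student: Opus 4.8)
The plan is to derive the global bound by summing, over all pairs $(u,v)\in V(T)\times V(T')$, the per-pair bound established in the preceding paragraph. There we showed that, for a fixed pair $(u,v)$, computing a \MWM on $G$ together with a \MWM on every $G_c$ for $c\in N(v)$ costs $\mathcal O(s^2t)$, where $s=\min\{\delta(u),\delta(v)\}$ and $t=\max\{\delta(u),\delta(v)\}$. The hypothesis of the lemma supplies a processing order in which all edge weights (the relevant $\TabL$ entries) are already available when a matching is computed, so no dependency-handling overhead arises and the total matching time equals $\sum_{(u,v)}\mathcal O(s^2t)$.

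The key step is to reshape the summand so the double sum factors. Since $s\,t=\min\{\delta(u),\delta(v)\}\cdot\max\{\delta(u),\delta(v)\}=\delta(u)\,\delta(v)$, we have $s^2t=s\cdot(s\,t)=s\,\delta(u)\,\delta(v)$. The crucial observation is then
$$s=\min\{\delta(u),\delta(v)\}\le\min\{\Delta(T),\Delta(T')\}=\Delta,$$
which holds because $\delta(u)\le\Delta(T)$ and $\delta(v)\le\Delta(T')$. Hence $s^2t\le\Delta\,\delta(u)\,\delta(v)$ for every pair. It is precisely this bound on the \emph{smaller} degree that replaces a naive factor $\Delta^2$ by a single factor $\Delta$.

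It remains to factor the double sum and apply the handshake lemma:
\begin{align*}
\sum_{(u,v)\in V(T)\times V(T')} s^2t
&\le \Delta\sum_{u\in V(T)}\sum_{v\in V(T')}\delta(u)\,\delta(v)\\
&= \Delta\left(\sum_{u\in V(T)}\delta(u)\right)\left(\sum_{v\in V(T')}\delta(v)\right).
\end{align*}
As $T$ and $T'$ are trees, $\sum_{u}\delta(u)=2(|T|-1)$ and $\sum_{v}\delta(v)=2(|T'|-1)$, so the right-hand side is $\mathcal O(|T|\,|T'|\Delta)$. Combined with the $\mathcal O(|T|\,|T'|)$ cost of Lemma~\ref{lemma:speedupM} for the $M$-values, this gives the claimed total.

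Almost all of the real work sits in the preceding paragraph, whose two-case analysis ($s\ge\log t$ and $s<\log t$) yields the per-pair bound; the present lemma is a pure summation argument and I expect no genuine obstacle. The one point that needs care is the reduction $s^2t=s\,\delta(u)\,\delta(v)$ together with $s\le\Delta$: getting this right is what keeps the bound at $\mathcal O(|T|\,|T'|\Delta)$ rather than $\mathcal O(|T|\,|T'|\Delta^2)$, and it relies on $\Delta$ being the minimum of the two maximum degrees.
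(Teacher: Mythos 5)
Your proposal is correct and follows essentially the same route as the paper: both bound the per-pair matching cost by $\mathcal O\bigl(\delta(u)\,\delta(v)\min\{\delta(u),\delta(v)\}\bigr)=\mathcal O(s^2t)$, replace the factor $\min\{\delta(u),\delta(v)\}$ by $\Delta$, and then factor the double sum using $\sum_u\delta(u)=\mathcal O(|T|)$ and $\sum_v\delta(v)=\mathcal O(|T'|)$. You merely spell out explicitly (the handshake lemma, the observation $s\le\Delta$ via $\delta(u)\le\Delta(T)$ and $\delta(v)\le\Delta(T')$) what the paper compresses into a single displayed chain of inclusions.
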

\begin{proof}
The time to compute all the \MWM{}s is 
$\mathcal O\left(\sum_{u\in V_T}\sum_{v\in V_{T'}}\delta(u)\delta(v)\min\{\delta(u),\delta(v)\}\right)\subseteq \mathcal O\left(\sum_{u\in V_T}\sum_{v\in V_{T'}}\delta(u)\delta(v)\Delta\right)=\mathcal O(|T|\,|T'|\Delta).$
\end{proof}

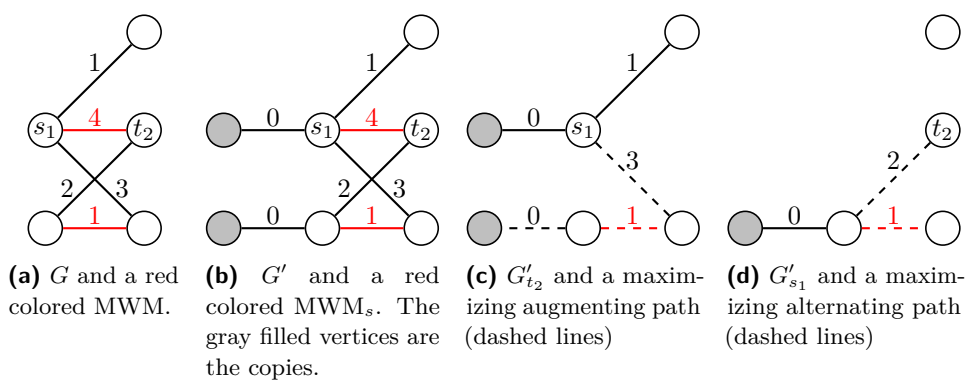
\begin{figure}
\centering
\begin{subfigure}[t]{0.16\textwidth}
\centering
\begin{tikzpicture}%
\node[vertexWhiteLarge] (s1) at (1.3,1.3) {};
\node at (s1) {$s_1$};
\node[vertexWhiteLarge] (s2) at (1.3,0) {};
\node[vertexWhiteLarge] (r1) at (2.6,2.6) {};
\node[vertexWhiteLarge] (r2) at (2.6,1.3) {};
\node at (r2) {$t_2$};
\node[vertexWhiteLarge] (r3) at (2.6,0) {};

\draw[edge] (s1) -- (r1) node [midway, above=0pt] {1};
\draw[edge,draw=red] (s1) -- (r2) node [midway, above=-2pt] {\textcolor{red}4};
\draw[edge] (s1) -- (r3) node [very near end, above=0pt] {3};
\draw[edge] (s2) -- (r2) node [very near start, above=0pt] {2};
\draw[edge,draw=red]  (s2) -- (r3) node [midway, above=-2pt] {\textcolor{red}1};
\end{tikzpicture}
\subcaption{$G$ and a red colored \MWM.}
\label{subfig:G}
\end{subfigure}
\ \ 
\begin{subfigure}[t]{0.22\textwidth}
\begin{tikzpicture}%
\node[vertexWhiteLarge] (s1) at (1.3,1.3) {};
\node at (s1) {$s_1$};
\node[vertexWhiteLarge] (s2) at (1.3,0) {};
\node[vertexWhiteLarge,fill=lightgray] (s1c) at (0,1.3) {};
\node[vertexWhiteLarge,fill=lightgray] (s2c) at (0,0) {};
\node[vertexWhiteLarge] (r1) at (2.6,2.6) {};
\node[vertexWhiteLarge] (r2) at (2.6,1.3) {};
\node at (r2) {$t_2$};
\node[vertexWhiteLarge] (r3) at (2.6,0) {};

\draw[edge] (s1) -- (s1c) node [midway, above=-2pt] {0};
\draw[edge] (s2) -- (s2c) node [midway, above=-2pt] {0};
\draw[edge] (s1) -- (r1) node [midway, above=0pt] {1};
\draw[edge,draw=red] (s1) -- (r2) node [midway, above=-2pt] {\textcolor{red}4};
\draw[edge] (s1) -- (r3) node [very near end, above=0pt] {3};
\draw[edge] (s2) -- (r2) node [very near start, above=0pt] {2};
\draw[edge,draw=red]  (s2) -- (r3) node [midway, above=-2pt] {\textcolor{red}1};
\end{tikzpicture}
\subcaption{$G'$ and a red colored \MWMk{s}. The gray filled vertices are the copies.}
\label{subfig:G'}
\end{subfigure}
\ \ 
\begin{subfigure}[t]{0.22\textwidth}
\begin{tikzpicture}%
\node[vertexWhiteLarge] (s1) at (1.3,1.3) {};
\node at (s1) {$s_1$};
\node[vertexWhiteLarge] (s2) at (1.3,0) {};
\node[vertexWhiteLarge,fill=lightgray] (s1c) at (0,1.3) {};
\node[vertexWhiteLarge,fill=lightgray] (s2c) at (0,0) {};
\node[vertexWhiteLarge] (r1) at (2.6,2.6) {};
\node[vertexWhiteLarge] (r3) at (2.6,0) {};

\draw[edge] (s1) -- (s1c) node [midway, above=-2pt] {0};
\draw[edge,dashed] (s2) -- (s2c) node [midway, above=-2pt] {0};
\draw[edge](s1) -- (r1) node [midway, above=0pt] {1};
\draw[edge,dashed] (s1) -- (r3) node [midway, above=0pt] {3};
\draw[edge,draw=red,dashed]  (s2) -- (r3) node [midway, above=-2pt] {\textcolor{red}1};
\end{tikzpicture}
\subcaption{$G'_{t_2}$ and a maximizing augmenting path (dashed lines)}
\label{subfig:minus_t}
\end{subfigure}
\ \ 
\begin{subfigure}[t]{0.22\textwidth}
\begin{tikzpicture}%
\node[vertexWhiteLarge] (s2) at (1.3,0) {};
\node[vertexWhiteLarge,fill=lightgray] (s2c) at (0,0) {};
\node[vertexWhiteLarge] (r1) at (2.6,2.6) {};
\node[vertexWhiteLarge] (r2) at (2.6,1.3) {};
\node at (r2) {$t_2$};
\node[vertexWhiteLarge] (r3) at (2.6,0) {};

\draw[edge] (s2) -- (s2c) node [midway, above=-2pt] {0};
\draw[edge,draw=red,dashed]  (s2) -- (r3) node [midway, above=-2pt] {\textcolor{red}1};
\draw[edge,dashed] (s2) -- (r2) node [midway, above=0pt] {2};

\end{tikzpicture}
\subcaption{$G'_{s_1}$ and a maximizing alternating path (dashed lines)}
\label{subfig:minus_s}
\end{subfigure}
\caption{\textbf{\subref{subfig:G})} $G$ with the smaller vertex set $U$ of size $s=2$ on the left and the larger set $V$ on the right. \textbf{\subref{subfig:G'})} We compute a \MWMk{s} on $G'$ to obtain a \MWM on $G$. \textbf{\subref{subfig:minus_t})} The matched vertex $t_2\in U$ is removed. We find a new \MWMk{s} from an augmenting path of maximal weight starting from that vertex' matching partner $s_1$. \textbf{\subref{subfig:minus_s})} The matched vertex $s_1\in V$ is removed. We find a new \MWMk{s-1} from an even length alternating path of maximal weight starting from that vertex' partner $t_2$.}
\label{fig:matchings}
\end{figure}

\begin{lemma}
\label{lemma:similargraph_matchings}
\label{lemma:speedup_Matching}
Let $G$ be a weighted bipartite graph with vertex sets $U$ and $V$, $s:=|U|\leq|V|=:t$.
Let either $C=U$ or $C=V$.
We can compute a \MWM on $G$ and a \MWM on each graph $G_c$, $c\in C$ 
in time $\mathcal O(s^4+s^2t)$.
\end{lemma}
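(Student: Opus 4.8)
The plan is to realize the construction sketched in Fig.~\ref{fig:matchings}. First I would reduce a \MWM on $G$ to a \MWMk{s} on the auxiliary graph $G'$ obtained from $G$ by adding, for every $u\in U$, a private copy $u'$ joined to $u$ by a single edge of weight $0$ (Fig.~\ref{subfig:G'}). Because the copy edges contribute weight $0$, a \MWMk{s} on $G'$ (which necessarily saturates $U$) restricts to a \MWM on $G$ of the same weight, and conversely any \MWM on $G$ extends to a \MWMk{s} on $G'$ by matching the unsaturated vertices of $U$ to their copies; the same equivalence holds verbatim for each $G_c$ and its augmented graph. By Lemma~\ref{lemma:matching_real_single} with $m=\mathcal O(st)$, this initial \MWMk{s}, call it $M$, is computed in time $\mathcal O(s^2t)$, and it simultaneously yields optimal vertex potentials that I reuse to keep every later path search on nonnegative reduced costs.

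Next I would split the $c\in C$ by whether $c$ is saturated by $M$. If $c$ is unsaturated, deleting it neither lowers the weight of $M$ nor admits a heavier matching, so a \MWM on $G_c$ equals the one on $G$ and costs $\mathcal O(1)$ to report; this disposes of all but at most $s$ vertices when $C=V$. For a saturated $c$ with partner $x$ I would distinguish the two sides exactly as in the figure. If $c\in V$ (Fig.~\ref{subfig:minus_t}) the optimum of $G_c$ still has cardinality $s$, so after discarding the edge $cx$ I re-saturate $U$ along a maximum weight augmenting path starting at the freed partner $x$. If $c\in U$ (Fig.~\ref{subfig:minus_s}) the optimum of $G_c$ has cardinality $s-1$, so after discarding $cx$ I reoptimize at that cardinality along a maximum weight even length alternating path starting at the freed partner $x$. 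Each deletion is processed independently from the fixed matching $M$, so the recomputations do not interfere.

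The structural claim at the heart of the argument is that one such path always restores optimality, and the cleanest way to see this is a symmetric-difference argument. Let $M^\ast$ be a true optimum for the deleted graph. Then $M\triangle M^\ast$ decomposes into alternating cycles and paths, and every cycle or path not incident to the deleted vertex is weight-neutral (otherwise $M$ or $M^\ast$ could be improved). The deleted vertex is saturated by exactly one of $M,M^\ast$, so it is an endpoint of a single alternating path $P$ of the sym\-metric difference; removing its first edge $cx$ turns $P$ into the required path based at $x$, whose parity is forced by the cardinality difference $|M|-|M^\ast|$ (odd, i.e.\ an augmenting path, for $c\in V$; even for $c\in U$). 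Hence it suffices to compute the single maximum weight path, and $M^\ast$ equals $(M\setminus\{cx\})$ augmented along it. A further consequence is that all interior vertices of $P$ are saturated by $M$, so every search runs on the ``matched core'' of $\mathcal O(s)$ vertices and $\mathcal O(s^2)$ edges; the one remaining hop from a matched $U$-vertex to an unsaturated $V$-vertex (the free endpoint in the augmenting case) is precomputed once per $U$-vertex in total time $\mathcal O(st)$.

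Assembling the costs gives the bound: the initial matching is $\mathcal O(s^2t)$; the at most $s$ augmenting searches for $C=V$ cost $\mathcal O(st)$ each after the $\mathcal O(st)$ precomputation, i.e.\ $\mathcal O(s^2t)$ in total; and the at most $s$ even alternating searches for $C=U$ run on the matched core in $\mathcal O(s^2)$ each, i.e.\ $\mathcal O(s^3)$. The total $\mathcal O(s^2t+s^3)$ lies within the claimed $\mathcal O(s^4+s^2t)$. I expect the main obstacle to be making the single-path claim fully rigorous together with the bookkeeping that confines each search to the matched core without incurring an extra $\log t$ factor from the large set $V$; the delicate regime is $C=U$, where the even length reoptimization (Fig.~\ref{subfig:minus_s}) must be shown to preserve cardinality $s-1$ while strictly increasing the weight.
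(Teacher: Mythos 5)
Your proposal follows essentially the same route as the paper's proof: the same auxiliary graph $G'$ with zero-weight copy edges reducing the problem to a cardinality-$s$ matching, the same case split (unmatched $c$ is free; matched $c\in V$ requires a maximum weight augmenting path from the freed partner; matched $c\in U$ requires a maximum weight even length alternating path), and the same correctness argument that alternating paths and cycles avoiding the deleted vertex are weight-neutral. Your runtime bookkeeping (reusing optimal potentials to run the searches on the matched core, plus precomputed best hops, giving $\mathcal O(s^2t+s^3)$) is in fact slightly sharper than the paper's per-deletion Bellman-Ford bound of $\mathcal O(st+s^3)$, i.e.\ $\mathcal O(s^2t+s^4)$ overall, but both analyses lie within the claimed $\mathcal O(s^4+s^2t)$.
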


\begin{proof}
From Lemma~\ref{lemma:matching_real_single} we know there is an algorithm which computes a \MWM on $G$ in time $\mathcal O(s^2t)$.
This algorithm first copies the $s$ vertices of $U$ and then adds an edge of weight 0 between each vertex of $U$ and its copy.
We denote this graph $G'$.
The algorithm computes a \MWMk{s} $M'$ on $G'$ ($M'$ is also a \MWM), which corresponds to a \MWM on $G$.
The graph $G'$ with one vertex $c\in N(v)$ removed is denoted $G'_c$.
If $c$ is not matched, we are done.
If $c$ is matched, let $u$ be the partner of $c$.
Let $M'_c:=M'\setminus \{cu\}$.
We observe $|M'_c|=s-1$.

First, assume $c\in V$.
In a \MWMk{s} of $G'_c$ each vertex of $U$ including $u$ must be matched.
An uneven length $M'_c$-augmenting path $P$ of maximal weight (the path's weight refers to the difference in the matching's weight after augmenting) incident to $u$ yields a \MWMk{s} on $G'_c$ and thus a \MWM on $G_c$.
This follows from the fact, that any $M'_c$-alternating cycle or path on $G'_c$ not incident to $u$ has nonpositive weight; otherwise $M'$ was no \MWM.
We can find such a path $P$ using Bellman-Ford in time $\mathcal O(st+s^3)$.
To this end, we multiply the weights of non matching edges by -1.
I.e., the absolute value of the length of a shortest path (which is nonpositive) corresponds to the increase in the matching's weight.
Since at most $s$ vertices of $V$ are matched by $M$, the total time is $\mathcal O(s^2t+s^4)$.

Second, assume $c\in U$.
Since $c$ is matched, let $v$ be the partner of $c$. 
$M'_c$ is of cardinality $s-1$.
This time we need to find an alternating path of even length (we removed a vertex from $U$) and of maximal weight incident to $v$.
Any alternating cycle or path not incident to $v$ cannot augment $M'_c$ to greater weight; otherwise $M'$ was no \MWM.
This path may have length 0, e.g., if $M'_c$ is a MWM of $M'$.
Again, the total time to compute the paths is $\mathcal O(s^2t+s^4)$.
Figure~\ref{fig:mcs_edge} illustrates the computation of the \MWM{}s.
\end{proof}

\subparagraph{Sequence of computation.}
For given vertices $(u,v)\in V(T)\times V(T')$ we denote the matching graph $G$ without removed vertices as \emph{main instance} and the matching graphs $G_c$ as its \emph{sub instances}.
Analog for \tin we define $M^{T',v}_{t}$ as main instance and $M^{T',c}_{t}$ for each $c\in N(v)$ as its \emph{sub instances}.

Let $u\in V(T)$ and $vw\in E(T')$.
We observe, for type \tin the following values depend circular on each other.
$M^{T',w}_{t}$ for the pair $(u,w)$ requires $\TabL^w(u,v,t)$ requires $M^{T',v}_{t}$ for the pair $(u,v)$ requires $\TabL^v(u,w,t)$ requires $M^{T',w}_{t}$ for the pair $(u,w)$.
We further observe, the \MWM{s} depend on table entries of both types.
We may break the dependencies by solving at most one sub instance before solving the main instance, as shown next.

We iterate over all roots $s\in V(T')$ and compute a rooted \WES between $T^r$ and $T'^s$ as in Lemma~\ref{lemma:recursion}.
During the recursion on vertices $(u,v)$ the following cases may happen.
\begin{enumerate}
\item The first instance to compute on $(u,v)$ is a main instance. Then we instantly compute all its sub instances from it.
\item The first instance to compute on $(u,v)$ is a sub instance. Then we compute only the sub instance without deriving it from the main instance.
\begin{enumerate}
\item If the second instance is a main instance, we instantly compute its sub instances.
\item Otherwise let $c_1\in N(v)$ and $c_2\in N(v)$ be the vertices corresponding to the first and second sub instance, respectively.
Let us consider table entries first.
When we computed the sub instance corresponding to $c_1$, all necessary table entries for $M^{T',v}_{t}$ except $\TabL^v(u,c_1,t)$ were available.
For the second sub instance $\TabL^v(u,c_1,t)$ is also available.
Thus we may instantly compute the main instance and all other sub instances including the one corresponding to $c_2$.
We may argue analog for the \MWM{s}.
\end{enumerate}
\end{enumerate}

\begin{theorem}
Let $T$ and $T'$ be (unrooted) vertex and/or edge labeled trees.
Let $\wIso$ be a weight function, $\Delta=\min \{\Delta(T),\Delta(T')\}$, and $p$ be a distance penalty.
A \WESu between $T$ and $T'$ can be computed in time $\mathcal{O}(|T|\,|T'|\Delta\}$ and space $\mathcal O(|T|\,|T'|)$.
\end{theorem}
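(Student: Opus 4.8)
The plan is to assemble the theorem from the structural characterization of a \WESu weight in Lemma~\ref{lemma:wes_strat} together with the efficiency results developed in Section~\ref{sec:unrooted_similarities}. By Lemma~\ref{lemma:wes_strat}, the weight of a \WESu equals $\max\{M_1, M_2\}$, so it suffices to compute both quantities within the claimed bounds and return the larger; reconstructing the actual embedding then follows by storing the usual $\mathcal O(|T|\,|T'|)$ backtracking data. The quantity $M_2$ is already handled by Lemma~\ref{lemma:wes_strat_complexity}, which gives time $\mathcal O(|T|\,|T'|)$ and space $\mathcal O(|T|\,|T'|)$, so the entire burden of the proof is to show that $M_1$ --- computed in Lemma~\ref{lemma:wes_strat_complexity} only in the slower time $\mathcal O(|T|\,|T'|^2\Delta)$ --- can now be obtained in time $\mathcal O(|T|\,|T'|\Delta)$.

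First I would recall that computing $M_1 = \max\{\mathcal W(r,v)\mid v\in V(T')\}$ amounts to running the rooted recursion of Lemma~\ref{lemma:recursion} for the fixed root $r$ of $T$ against every root $s\in V(T')$. The key structural observation from the proof of Lemma~\ref{lemma:wes_strat_complexity} is that $T'^s_v$, and hence every table entry $\TabL^s(u,v,t)$ and every auxiliary maximum $M^{T,s}_t$, $M^{T',s}_t$, depends on $s$ only through the edge $wv\in E(T')$ or the single root $s=v$. This collapses the apparent $\Theta(|T'|)$ roots down to $\mathcal O(|T'|)$ distinct instances indexed by edges and vertices, which is exactly what Lemmas~\ref{lemma:speedupM} and~\ref{lemma:speedupM_MatchingAll} exploit: the former bounds the total cost of all the $M^{T,s}_t$ and $M^{T',s}_t$ values by $\mathcal O(|T|\,|T'|)$, and the latter bounds the total cost of all the maximum weight matchings by $\mathcal O(|T|\,|T'|\Delta)$, each matching being computed through the similar-graph technique of Lemma~\ref{lemma:speedup_Matching}.

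The main obstacle, and the crux of the argument, is that both speed-up lemmas are stated conditionally: they assume the existence of an evaluation order in which, for each pair $(u,v)$, all values needed to compute the main instance and its sub instances are already available. The difficulty is a genuine circular dependency across roots: for an edge $vw$, the value $M^{T',w}_t$ at $(u,w)$ requires $\TabL^w(u,v,t)$, which requires $M^{T',v}_t$ at $(u,v)$, which requires $\TabL^v(u,w,t)$, which closes the loop back to $M^{T',w}_t$. I would resolve this exactly as in the \emph{Sequence of computation} paragraph: iterate over the roots $s\in V(T')$ running the recursion of Lemma~\ref{lemma:recursion}, and whenever a pair $(u,v)$ is first encountered, distinguish whether the first instance to be evaluated is a main instance or a sub instance. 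If it is a main instance, all its sub instances are derived from it immediately; if it is a sub instance, one computes that single sub instance directly and defers the main instance until a second instance of $(u,v)$ is reached, at which point the one previously missing table entry has become available and the main instance together with all remaining sub instances can be filled in. This guarantees that at most one sub instance is ever solved ahead of its main instance, so the total work stays within the per-instance budget and the hypotheses of Lemmas~\ref{lemma:speedupM} and~\ref{lemma:speedupM_MatchingAll} are met.

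Finally I would combine the bounds: the matching computations dominate at $\mathcal O(|T|\,|T'|\Delta)$, the auxiliary maxima and $M_2$ contribute only $\mathcal O(|T|\,|T'|)$, and taking the maximum of $M_1$ and $M_2$ is trivial, yielding the claimed time bound. For space, the edge/vertex indexing means all table entries fit in $\mathcal O(|T|\,|T'|)$ as in Lemma~\ref{lemma:wes_strat_complexity}, and by Lemma~\ref{lemma:matching_real_single} each individual matching instance is solved within the same bound with reusable working space, establishing the $\mathcal O(|T|\,|T'|)$ space bound.
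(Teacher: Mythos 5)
Your proposal is correct and takes essentially the same route as the paper, whose proof of this theorem is left implicit: it is precisely the assembly of Lemma~\ref{lemma:wes_strat} (correctness via $\max\{M_1,M_2\}$), Lemma~\ref{lemma:wes_strat_complexity} (the $M_2$ and space bounds), and the speed-ups of Sect.~\ref{sec:unrooted_similarities} (Lemmas~\ref{lemma:speedupM} and~\ref{lemma:speedupM_MatchingAll}), whose hypotheses are discharged by the sequence-of-computation argument that breaks the circular dependencies, exactly as you describe. Your only imprecision---attributing every matching computation to Lemma~\ref{lemma:speedup_Matching}, whereas the paper uses it only in the case $\min\{\delta(u),\delta(v)\}<\log\max\{\delta(u),\delta(v)\}$ and otherwise falls back on the technique of~\cite{DrKrMu2016}---is harmless, since you invoke Lemma~\ref{lemma:speedupM_MatchingAll} for the aggregate $\mathcal O(|T|\,|T'|\Delta)$ bound.
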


\section{Conclusions}
We presented an algorithm which solves the largest weighed common subtree embedding problem in time $\mathcal O(kl\Delta)$.
For rooted trees of integral weights bounded by a constant we proved a bound of $\mathcal O(|T|\,|T'|\sqrt\Delta\log (C\min\{|T|,|T'|\}))$.
Our approach generalizes the maximum common subtree problem~\cite{DrKrMu2016} and the largest common subtree embedding problem both unlabeled~\cite{GuNi1998} and labeled~\cite{Kao2001} by supporting weights between labels and a distance penalty for skipped vertices.

A remaining open problem is whether the time bound for unrooted trees can be improved when the weights are integral and bounded by a constant.
Since weight scaling algorithms for matchings do not work incrementally~\cite{RaTa12}, there is no obvious way to exploit the similarities in the given matching graphs.

\bibliography{lit}
\end{document}